\def\@adminfootnotes{%
  \let\@makefnmark\relax  \let\@thefnmark\relax
  \ifx\@empty\@date\else \@footnotetext{\@setdate}\fi
  \ifx\@empty\@subjclass\else \@footnotetext{\@setsubjclass}\fi
  \ifx\@empty\@keywords\else \@footnotetext{\@setkeywords}\fi
  \ifx\@empty\thankses\else \@footnotetext{%
    \def\par{\let\par\@par}\@setthanks}%
  \fi
}
\newtheorem{theorem}{Theorem}[section]
\newtheorem{proposition}[theorem]{Proposition}
\newtheorem{prop}[theorem]{Proposition}
\newtheorem{cor}[theorem]{Corollary}
\theoremstyle{definition}
\theoremstyle{remark}
\newtheorem{remark}[theorem]{Remark}
\numberwithin{equation}{section}
\let\oldtocsection=\tocsection
\let\oldtocsubsection=\tocsubsection
\renewcommand{\tocsection}[2]{\hspace{0em}\oldtocsection{#1}{#2}}
\renewcommand{\tocsubsection}[2]{\hspace{1em}\oldtocsubsection{#1}{#2}}
\title[Topological quantum field theories over Dedekind domains]{Two-dimensional topological quantum field theories of rank two over Dedekind domains}
\author{Fabian Espinoza}
\address{Department of Mathematics, Johns Hopkins University, Baltimore, MD 21218, USA}
\email{\href{mailto:fespino5@jhu.edu}{fespino5@jhu.edu}}
\author{Mee Seong Im}
\address{Department of Mathematics, Johns Hopkins University, Baltimore, MD 21218, USA}
\email{\href{mailto:meeseong@jhu.edu}{meeseong@jhu.edu}}
\author{Mikhail Khovanov} 
\address{Department of Mathematics, Johns Hopkins University, Baltimore, MD 21218, USA}
\email{\href{mailto:khovanov@jhu.edu}{khovanov@jhu.edu}}
\subjclass[2020]{Primary: 11R04, 57K16, 18M05, 16L60.}
\date{February 28, 2025}
\providecommand{\keywords}[1]{\textbf{\textit{Key words and phrases.}} #1}
\keywords{Two-dimensional TQFT, commutative Frobenius algebra, Dedekind domain, ideal class group, number field}
\begin{document}

\def\Aff{\mathsf{Aff}}
\def\AND{\mathsf{AND}}
\def\concatenate{\mathsf{concatenate}}
\def\Br{\mathsf{Br}}
\def\Gal{\mathsf{Gal}}
\def\gen{\mathsf{generators}}
\def\sl{\mathfrak{sl}}
\def\GL{\mathsf{GL}}
\def\SL{\mathsf{SL}}
\def\init{\mathsf{in}}
\def\t{\mathsf{t}}
\def\out{\mathsf{out}}
\def\I{\mathsf I}
\def\region{\mathsf{region}}
\def\PMI{\mathsf{PMI}}
\def\plane{\mathsf{plane}}
\def\R{\mathbb R}
\def\Q{\mathbb Q}
\def\Z{\mathbb Z}
\def\mc{\mathcal{c}}
\def\finite{\mathsf{finite}}
\def\infinite{\mathsf{infinite}}
\def\N{\mathbb N} 
\def\C{\mathbb C}
\def\sep{\mathsf{sep}}
\def\S{\mathbb S}
\def\SS{\mathbb S} 
\def\CP{\mathbb P}
\def\Ob{\mathsf{Ob}}
\def\op{\mathsf{op}}
\def\new{\mathsf{new}}
\def\old{\mathsf{old}}
\def\OR{\mathsf{OR}}
\def\AND{\mathsf{AND}}
\def\rat{\mathsf{rat}}
\def\rec{\mathsf{rec}}
\def\tail{\mathsf{tail}}
\def\coev{\mathsf{coev}}
\def\ev{\mathsf{ev}}
\def\id{\mathsf{id}}
\def\s{\mathsf{s}}
\def\S{\mathsf{S}}
\def\t{\mathsf{t}}
\def\start{\textsf{starting}}
\def\Notation{\textsf{Notation}}
\def\circleft{\raisebox{-.18ex}{\scalebox{1}[2.25]{\rotatebox[origin=c]{180}{$\curvearrowright$}}}}
\renewcommand\SS{\ensuremath{\mathbb{S}}}
\def\FF{\mathbf{F}}

\newcommand{\aver}{\mathsf{av}}  
\newcommand{\ophana}{\overline{\phantom{a}}}
\newcommand{\Bool}{\mathbb{B}}
\newcommand{\dmod}{\mathsf{-mod}}
\newcommand{\lang}{\mathsf{lang}}
\newcommand{\pfmod}{\mathsf{-pfmod}}
\newcommand{\primitive}{\mathsf{irr}} 
\newcommand{\Bmodo}[1]{\Bool_{#1}\mathsf{-mod}}  

\newcommand{\PP}{\mathcal{P}} 

\newcommand{\whA}{\widehat{A}}
\newcommand{\whC}{\widehat{C}}
\newcommand{\whM}{\widehat{M}}
\newcommand{\ClK}{\mathsf{Cl}(K)}
\newcommand{\Cl}{\mathsf{Cl}}

\newcommand{\mathT}{\mathsf{T}}
\newcommand{\mathF}{\mathsf{F}}
\newcommand{\mcS}{\mathcal{S}}
\newcommand{\mcN}{\mathcal{N}}
\newcommand{\mcF}{\mathcal F}
\newcommand{\mcL}{\mathcal{L}}
\newcommand{\wvareps}{\widetilde{\varepsilon}}
\newcommand{\undeps}{\underline{\varepsilon}}
\newcommand{\unda}{\underline{a}}
\newcommand{\undb}{\underline{b}}
\newcommand{\undt}{\underline{t}}
\newcommand{\undd}{\underline{d}}
\newcommand{\undcprime}{\underline{c'}}

\let\oldemptyset\emptyset
\let\emptyset\varnothing

\newcommand{\undempty}{\underline{\emptyset}}
\newcommand{\undsigma}{\underline{\sigma}}
\newcommand{\undtau}{\underline{\tau}}
\def\basis{\mathsf{basis}}
\def\irr{\mathsf{irr}} 
\def\spanning{\mathsf{spanning}}
\def\elmt{\mathsf{elmt}}

\def\H{\mathsf{H}}
\def\I{\mathsf{I}}
\def\II{\mathsf{II}}
\def\l{\lbrace}
\def\r{\rbrace}
\def\o{\otimes}
\def\lra{\longrightarrow}
\def\Ext{\mathsf{Ext}}
\def\mf{\mathfrak} 
\def\mcC{\mathcal{C}}
\def\mcU{\mathcal{U}}
\def\mcT{\mathcal{T}}
\def\mcO{\mathcal{O}}
\def\mcE{\mathcal{E}}
\def\Fr{\mathsf{Fr}}

\def\ovb{\overline{b}}
\def\tr{{\sf tr}} 
\def\str{{\sf str}} 
\def\det{{\sf det }} 
\def\tral{\tr_{\alpha}}
\def\one{\mathbf{1}}   

\def\lra{\longrightarrow}
\def\twoheadlra{\longrightarrow\hspace{-4.6mm}\longrightarrow}
\def\hooklra{\raisebox{.2ex}{$\subset$}\!\!\!\raisebox{-0.21ex}{$\longrightarrow$}}
\def\kk{\mathbf{k}}  
\def\gdim{\mathsf{gdim}}  
\def\rk{\mathsf{rk}}
\def\undep{\underline{\varepsilon}}
\def\mathM{\mathbf{M}}  

\def\Cob{\mathsf{Cob}} 
\def\Kar{\mathsf{Kar}}   

\newcommand{\mfp}{\mathfrak{p}}
\newcommand{\brak}[1]{\ensuremath{\left\langle #1\right\rangle}}
\newcommand{\brakspace}[1]{\ensuremath{\left\langle \:\: #1\right\rangle}}

\newcommand{\oplusop}[1]{{\mathop{\oplus}\limits_{#1}}}
\newcommand{\ang}[1]{\langle #1 \rangle } 
\newcommand{\ppartial}[1]{\frac{\partial}{\partial #1}} 
\newcommand{\checkr}{{\bf \color{red} CHECK IT}}
\newcommand{\checkb}{{\bf \color{blue} CHECK IT}}
\newcommand{\checkk}[1]{{\bf \color{red} #1}}

\newcommand{\mcA}{{\mathcal A}}
\newcommand{\cZ}{{\mathcal Z}}
\newcommand{\sq}{$\square$}
\newcommand{\bi}{\bar \imath}
\newcommand{\bj}{\bar \jmath}
\newcommand{\FinProb}{\mathsf{FinProb}}

\newcommand{\undn}{\underline{n}}
\newcommand{\undm}{\underline{m}}
\newcommand{\undzero}{\underline{0}}
\newcommand{\undone}{\underline{1}}
\newcommand{\undtwo}{\underline{2}}
\newcommand{\wtdelta}{\widetilde{\Delta}}

\newcommand{\cob}{\mathsf{cob}} 
\newcommand{\comp}{\mathsf{comp}} 

\newcommand{\Aut}{\mathsf{Aut}}
\newcommand{\Hom}{\mathsf{Hom}}
\newcommand{\Idem}{\mathsf{Idem}}
\newcommand{\Ind}{\mbox{Ind}}
\newcommand{\Id}{\textsf{Id}}
\newcommand{\End}{\mathsf{End}}
\newcommand{\iHom}{\underline{\mathsf{Hom}}}
\newcommand{\Bools}{\Bool^{\mathfrak{s}}}
\newcommand{\mfs}{\mathfrak{s}}
\newcommand{\blueline}{line width = 0.45mm, blue}

\newcommand{\drawing}[1]{
\begin{center}{\psfig{figure=fig/#1}}\end{center}}

\def\endomCempt{\End_{\mcC}(\emptyset_{n-1})}

\def\MS#1{\textbf{\color{NavyBlue}[MS: #1]}}
\def\MK#1{\textbf{\color{teal}[MK: #1]}}
\def\FE#1{\textbf{\color{Red}[FE: #1]}}

\begin{abstract}  
We give examples of Frobenius algebras of rank two over ground Dedekind rings which are projective but not free and discuss possible applications of these algebras to link homology.
\end{abstract}

\maketitle
\tableofcontents


%
%

\section{Introduction}
\label{section:intro}

By an $n$-dimensional topological quantum field theory (TQFT), we mean a monoidal functor $\mathcal{F}:\Cob_n\lra R\dmod$, where $\Cob_n$ is the category of $n$-dimensional cobordisms between $(n-1)$-dimensional manifolds, which we will write as $(n-1)$-manifolds, that are either oriented or unoriented, and $R$ is a commutative ring. The target category $R\dmod$ is the category of $R$-modules. 

It is easy to see that, for any $(n-1)$-manifold $N$, the module $\mathcal{F}(N)$ is a finitely-generated projective $R$-module; see~\cite[Proposition 2.1]{GIKKL23} for $n=1$, and the same argument works for any $n$. 

A one-dimensional oriented TQFT $\mathcal{F}$ is given by such a finitely generated projective $R$-module $P$. Namely, if $P=\mathcal{F}(+)$, then its dual $P^{\ast}\cong \mathcal{F}(-)$, where $+$ and $-$ are points viewed as 0-manifolds with positive and negative orientation, respectively. To \emph{cup} and \emph{cap} cobordisms~\cite{GIKKL23}, this TQFT assigns co-evaluation and evaluation maps, respectively. Consequently, the isomorphism classes of oriented one-dimensional TQFTs are in a bijection with finitely generated projective $R$-modules.

Topological quantum field theories are most commonly studied over a field $K$, with $\mcF(N)$ a finite-dimensional $K$-vector space for any $(n-1)$-manifold $N$.  At the same time, integral structures are often essential in a more refined analysis of TQFTs.   For example,  the Witten--Reshetikhin--Turaev (WRT) 3D TQFT~\cite{Witt89,ReshTur91} is defined over $\C$, with a root of unity parameter $q\in \C$. Integral structures in these TQFTs, see~\cite{CL05} and references therein, relate to 
Lusztig--Kashiwara canonical bases in irreducible representations of quantum groups~\cite{Kash90_crystal,Kash91_crystal,Lusz90_canonical,Lusz10_intro_quantum}. The integral structure also leads to the Habiro rings~\cite{Hab02,Hab04,GSWZ24}  and to the conjectural expansions at $q=0$ of the analytic extension of the WRT invariants into the unit disk. These expansions are expected to have integer coefficients which lift to the Euler characteristics of an unknown homology theory for 3-manifolds~\cite{GPV17}. In these developments, however, one does not study modifications of the WRT TQFT with state spaces of 2-manifolds being non-free modules over the ground ring. 

 Another notable example of integrality is the theory of invertible TQFTs. In an invertible $n$-dimensional TQFT, the object $\mcF(N)$ associated to an $(n-1)$-manifold $N$ is an invertible object of the target monoidal category, such as the category of line bundles over an algebraic variety, and the map that $\mcF$ associates to each $n$-cobordism is invertible~\cite{SchPri24,Freed14}. 

\vspace{0.07in} 

 In this paper, we start to investigate two-dimensional TQFTs over a commutative ring $R$ such that the state space $A$ of a circle is not a free $R$-module. A 2D TQFT is determined by the commutative Frobenius algebra structure on the $R$-algebra $A=\mathcal{F}(\SS^1)$ which is the state space of a circle $\SS^1$. 
If $R$ has no zero divisors (i.e. $R$ is a domain), consider the field of fractions $K=K(R)$. Tensoring with $K$ over $R$ converts commutative Frobenius pair $(A,R)$ to   commutative Frobenius pair $(A_K,K)$ over the field $K$, where  $A_K:=A\otimes_R K$.  

We are looking for examples of commutative Frobenius $R$-algebras $A$ which are projective but not free modules over $R$. 
For simplicity and convenience, assume that $R$ is a Dedekind ring. A Dedekind ring $\mcO$ often admit projective non-free modules, and the supply of the latter is measured by the ideal class group $
\Cl(\mcO)$.
 Examples of Dedekind rings include the ring of integers $\mcO_K$ in a number field $K$ and the ring of functions on a smooth irreducible algebraic curve.

Any projective $\mcO$-module $M$ of rank $n$ has the form $M\cong \mcO^{n-1}\oplus P$, where $P$ is a rank one projective $\mcO$-module. The isomorphism classes of the latter are parametrized by the elements of the ideal class group $\Cl(\mcO)$, which is an abelian group. If $\mcO=\mcO_K$ is the ring of integers in a number field $K$, the ideal class group is finite. For ideals $\mathfrak{a}$ and $\mathfrak{a}'$, viewed as projective $\mcO$-modules, we have $\mathfrak{a}\oplus \mathfrak{a}'\cong \mcO\oplus \mathfrak{a}\mathfrak{a}'$. 

\vspace{0.07in} 

Let $A$ be a commutative Frobenius algebra over a Dedekind ring $\mcO$. The algebra $A$ is equipped with the unit element, given by a morphism $\iota:\mcO\lra A$, with commutative, associative and unital multiplication $m:A^{\otimes 2}\lra A$ and with a nondegenerate trace map $\varepsilon:A\lra \mcO$. 

Define the dual $\mcO$-module $A^{\ast}:=\Hom_{\mcO}(A,\mcO)$. The map $\varepsilon$ as above induces a map of $\mcO$-modules 
\begin{equation} \label{eq_wvareps} 
{\wvareps}: A\lra A^{\ast}, \ \ A\ni a\mapsto \varepsilon_a\in A^{\ast}, \ \ \varepsilon_a(b):=\varepsilon(ab), \ \ b\in A.  
\end{equation}    
The map $\varepsilon$ is called \emph{nondegenerate} if $\widetilde{\varepsilon}$ is an isomorphism of $\mcO$-modules. The trace $\varepsilon$ is nondegenerate precisely when $A$ is Frobenius over $\mcO$.  

Dualize the multiplication to a map 
\[
A^{\ast}\stackrel{m^{\ast}}{\lra}(A\otimes_{\mcO} A)^{\ast}\cong A^{\ast}\otimes_{\mcO} A^{\ast}.
\]
Since $A$ is Frobenius,  $\varepsilon$ is nondegenerate, and 
 composing with $\wvareps$ and $\wvareps\,^{-1}\otimes \wvareps\,^{-1}$ results in the map 
$\Delta: A\lra A^{\otimes 2}$ dual to the multiplication $m$ in an appropriate sense:
\begin{gather}
    \begin{aligned}
\xymatrix@-1pc{
A^{*} \ar[rr]^{m^*} & & & & \hspace{-1.4cm} (A^{\otimes 2})^* \simeq   A^* \otimes A^* 
\ar@<-2ex>[dd]_{\wvareps^{-1}}^{\hspace{0.15cm}\otimes}  
\ar@<2.4ex>[dd]^{\wvareps^{-1}.}  \\  
& & & & \\ 
A \ar[uu]^{\wvareps} \ar@{.>}[rrrr]^{\Delta}
& & & & A\otimes A  \\ 
}
    \end{aligned}
\label{xymatrix_epsilontilde_maps}
\end{gather}

 
There is an $\mcO$-module decomposition $A\cong \mcO^{n-1}\oplus P$ for some projective rank one $\mcO$-module $P\in \Cl(\mcO)$. Then $A^{\ast}\cong \mcO^{n-1}\oplus P^{\ast}$. An isomorphism of $\mcO$-modules $A\cong A^{\ast}$ implies that $P\cong P^{\ast}$, so that $P\otimes P\cong \mcO$. 

\begin{cor} For a commutative Frobenius $\mcO$-algebra $A$ as above, 
    rank one projective $\mcO$-module $P$ has order at most two in the ideal class group $\Cl(\mcO)$. 
\end{cor}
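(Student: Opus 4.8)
The plan is to extract the class $[P]\in\Cl(\mcO)$ from the top exterior power of $A$ and to use that the $\mcO$-module isomorphism $A\cong A^{\ast}$ (which holds because $\varepsilon$ is nondegenerate, i.e.\ $\wvareps$ is an isomorphism) forces $P\cong P^{\ast}$. Write $n=\rk_{\mcO}A$, so that, as recalled above, $A\cong\mcO^{n-1}\oplus P$ for a rank one projective $P$. The key computation is that of $\bigwedge^{n}A$: since $\bigwedge^{n}(M\oplus N)\cong\bigoplus_{i+j=n}\bigwedge^{i}M\otimes_{\mcO}\bigwedge^{j}N$, and since $\bigwedge^{j}P=0$ for $j\geq 2$ (as $P$ has rank one) while $\bigwedge^{n}(\mcO^{n-1})=0$, only the term $i=n-1,\ j=1$ survives, giving $\bigwedge^{n}A\cong\bigwedge^{n-1}(\mcO^{n-1})\otimes_{\mcO}P\cong P$. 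Dually $\bigwedge^{n}(A^{\ast})\cong P^{\ast}$.

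Next I would invoke the standard compatibility of the top exterior power with duality for finitely generated projective modules, $\bigwedge^{n}(A^{\ast})\cong(\bigwedge^{n}A)^{\ast}$ (naturally, and checkable after localizing at each maximal ideal where everything is free). Feeding in the isomorphism $\wvareps\colon A\xrightarrow{\ \sim\ }A^{\ast}$ of $\mcO$-modules then yields
$P\cong\bigwedge^{n}A\cong\bigwedge^{n}A^{\ast}\cong(\bigwedge^{n}A)^{\ast}\cong P^{\ast}$.
Finally, a rank one projective module over a domain is invertible: the evaluation pairing $P\otimes_{\mcO}P^{\ast}\lra\mcO$ is an isomorphism (again, it is one after localizing at every maximal ideal, hence globally), so $[P^{\ast}]=[P]^{-1}$ in $\Cl(\mcO)$. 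Combining this with $P\cong P^{\ast}$ gives $[P]^{2}=[P\otimes_{\mcO}P^{\ast}]=[\mcO]$, the identity of $\Cl(\mcO)$; equivalently $P\otimes_{\mcO}P\cong\mcO$, so $[P]$ has order at most two.

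The routine ingredients here — vanishing of $\bigwedge^{\geq 2}$ of a rank one module, the décomposition of $\bigwedge^{n}$ of a direct sum, invertibility of line bundles — are all verifiable locally and present no difficulty. The only step I expect to require a word of care is the \emph{uniqueness of the rank one summand}: that $\mcO^{n-1}\oplus P\cong\mcO^{n-1}\oplus P'$ forces $P\cong P'$. This is exactly Steinitz's classification of finitely generated projective modules over a Dedekind domain, and the exterior-power argument above is precisely the clean way to see it without quoting the classification as a black box. If one prefers, one can run the argument at the level of the pair $(A,A^{\ast})$ directly, noting $\bigwedge^{n}A$ is manifestly the obstruction to $A$ being free.
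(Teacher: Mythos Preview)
Your argument is correct. The paper's own justification (stated in the paragraph immediately preceding the corollary rather than as a separate proof) follows exactly the same outline: from $A\cong\mcO^{n-1}\oplus P$ and $A^{\ast}\cong\mcO^{n-1}\oplus P^{\ast}$, the isomorphism $\wvareps:A\cong A^{\ast}$ forces $P\cong P^{\ast}$, hence $P\otimes P\cong\mcO$.

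The one difference is in how the implication ``$\mcO^{n-1}\oplus P\cong\mcO^{n-1}\oplus P^{\ast}\Rightarrow P\cong P^{\ast}$'' is justified. The paper simply asserts it, implicitly invoking the Steinitz classification of finitely generated projective modules over a Dedekind domain (the rank-one summand is well defined up to isomorphism). You instead extract $[P]$ via the determinant line bundle $\bigwedge^{n}A\cong P$, together with the natural isomorphism $\bigwedge^{n}(A^{\ast})\cong(\bigwedge^{n}A)^{\ast}$. This is a cleaner, more self-contained route to the same conclusion: it avoids quoting Steinitz as a black box and works verbatim over any commutative ring for which $A$ is finitely generated projective of constant rank. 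Your final paragraph already identifies this accurately; there is nothing to add.
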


In our search for projective non-free commutative Frobenius algebras $A$, we are thus limited to the case when $A\cong \mcO^{n-1}\oplus P$ with $P\otimes P\cong \mcO$ and $P\not\cong \mcO$. Thus, $P$ is a rank one projective of order two in the ideal class group. 

\vspace{0.07in} 

\noindent 
\textbf{Acknowledgments.} This paper resulted from a reading and research course given to one of the authors by the other two in the fall of 2024 at Johns Hopkins University. M.K.~would like to acknowledge partial support from NSF grant DMS-2204033 and Simons Collaboration Award 994328. 

%
%

\section{Rank two TQFTs}


\subsection{Integrality conditions and equations}
We now look for rank two commutative Frobenius algebras $A$ over a Dedekind ring $\mcO$ with the field of fractions $K=K(\mcO)$ such that $A$ is not a free $\mcO$-module. 

Since $\rk_{\mcO}(A)=2$,  we can decompose $A\cong \mcO\oplus \mu$, where projective module (or ideal) $\mu$ has order two in $\ClK$, $\mu\not\cong \mcO$, $\mu^2\cong \mcO$. We can assume that $\mu\subset \mcO$, write $\mu\mu=(z)$ for some $ z\in \mcO$, and interchangeably treat $\mu$ as a projective $\mcO$-module and as an ideal of $\mcO$. Assume that 
\begin{equation}\label{eq_A} 
A\cong \mcO\, 1 \oplus \mu \, X,
\end{equation}
where $1$ and $X$ are generators, and the multiplication is given by 
\begin{equation}\label{eq_mult}
X^2 = a X+b, \ \ a,b\in K. 
\end{equation}
Computing the multiplication on $\mu$-multiples of $X$, 
\[
(u_1X)(u_2X) = u_1u_2 aX + u_1 u_2 b, \ \ u_1,u_2\in \mu, 
\]
tells us that $u_1u_2a\in \mu$ and $u_1u_2b\in \mcO$ for all $u_1,u_2\in \mu$. Consequently, 
\[
a\in \mu^{-1}=z^{-1}\mu, \ \ b\in z^{-1}\mcO, 
\]
and any such pair $(a,b)$ turns $A$ into a commutative unital $\mcO$-algebra with the above multiplication. 

We next ask for a nondegenerate trace map $\varepsilon:A\lra \mcO$. The maps $\varepsilon$ that are $\mcO$-linear are described by the following data: 
\[
\varepsilon(1)\in \mcO, \ \ \varepsilon|_{\mu X} \ :\ \mu \lra \mcO.
\]
The map $\varepsilon: \mu X\lra \mcO$ is given by the element $\varepsilon(X)\in \mu^{-1}=z^{-1}\mu$. Thus, 
without the nondegeneracy condition, a trace map is given by a pair $(\varepsilon(1),\varepsilon(X))$ with $\varepsilon(1)
\in\mcO$ and $\varepsilon(X)\in z^{-1}\mu$. 

Together, the data of an algebra $A$ with a trace $\varepsilon$ is given by the quadruple 
\begin{equation}\label{eq_integrality_cond}
(a,b,\varepsilon(1),\varepsilon(X)), \ \ \  
a\in z^{-1}\mu, \ \ b\in z^{-1}\mcO, \ \ \varepsilon(1)\in \mcO, \ \ \varepsilon(X)\in z^{-1}\mu. 
\end{equation}

Next, consider the nondegeneracy condition on the trace. First, write the map $\wvareps$ in \eqref{eq_wvareps} explicitly over $K$, using the dual basis $\{1^{\ast},X^{\ast}\}$ for $A_K^{\ast}:=A^{\ast}\otimes_{\mcO}K$. Here 
\[
1^{\ast}(1)=1, \ \ 1^{\ast}(X)=0, \ \   X^{\ast}(1)=0, \ \  X^{\ast}(X)=1. 
\]
Integrally, 
\begin{equation}\label{eq_integrally}
    A\cong \mcO\, 1 \oplus \mu \, X\ \ \textrm{and} \ \   A^{\ast}=\mcO 1^{\ast}\oplus z^{-1}\mu X^{\ast}.
\end{equation}
The map 
\begin{equation}\label{eq_wvareps_a}
\widetilde{\varepsilon}:A\lra A^{\ast}
\end{equation} 
in \eqref{eq_wvareps} should be an isomorphism. 
 Let us compute this map, first over the field of fractions $K$:  
\begin{equation}\label{eq_tildeeps}
\wvareps(1)=\varepsilon(1)1^{\ast}+\varepsilon(X)X^{\ast}, \ \ 
\wvareps(X) = \varepsilon(X) 1^{\ast}+ \varepsilon(X^2) X^{\ast}=\varepsilon(X) 1^{\ast}+ (a\varepsilon(X)+b\varepsilon(1)) X^{\ast},  
\end{equation}
or, in matrix notation, 
\begin{equation}\label{eq_tildeeps_mat}
    \begin{pmatrix} \wvareps(1) \\  \wvareps(X) \end{pmatrix}  = 
    \begin{pmatrix} \varepsilon(1) & \varepsilon(X) \\
    \varepsilon(X) & \varepsilon(X^2)
    \end{pmatrix}
    \begin{pmatrix} 1^{\ast} \\  X^{\ast}\end{pmatrix}.
\end{equation}
Let 
\begin{equation}\label{eq_M_wtdelta}
    M:= \begin{pmatrix} \varepsilon(1) & \varepsilon(X) \\ \varepsilon(X) & \varepsilon(X^2) \end{pmatrix}, \ \ 
    \wtdelta := \det(M) = \varepsilon(1)\varepsilon(X^2)-\varepsilon(X)^2. 
\end{equation}
Necessarily, $\wtdelta\not= 0$. 
The inverse map $\wvareps\,^{-1}:A^{\ast}_K\lra A_K$ in these bases is given by the inverse matrix $M^{-1}$: 
\begin{equation}\label{eq_eps_inv}
    \begin{pmatrix} \wvareps\,^{-1}(1^{\ast}) \\  \wvareps\,^{-1}(X^{\ast}) \end{pmatrix}  = \wtdelta^{-1}
    \begin{pmatrix} \varepsilon(X^2) & -\varepsilon(X) \\
    -\varepsilon(X) & \varepsilon(1)
    \end{pmatrix}
    \begin{pmatrix} 1 \\  X\end{pmatrix}.
\end{equation}
Let us write down multiplication in $A_K$: 
\begin{equation}
   m: \  1\otimes 1 \mapsto 1, \ \  \ 
   1\otimes X\mapsto X, \ \ \ 
   X\otimes 1 \mapsto X, \ \ \  
   X\otimes X \mapsto aX + b1,
\end{equation}
and transpose it to comultiplication in $A_K^{\ast}$:
\begin{equation}
   m^{\ast}: \ 1^{\ast}\lra 1^{\ast}\otimes 1^{\ast} + b X^{\ast}\otimes X^{\ast}, \ \ \ 
   X^{\ast}\lra 1^{\ast}\otimes X^{\ast}+X^{\ast}\otimes 1^{\ast} + a X^{\ast}\otimes X^{\ast}. 
\end{equation}
We have 
\[
\Delta = (\wvareps\,^{-1} \otimes \wvareps\,^{-1})\circ m^{\ast}\circ \wvareps
\]
and 
\begin{eqnarray*}
    \Delta(1) & = & (\wvareps\,^{-1} \otimes \wvareps\,^{-1})\circ m^{\ast}\circ \wvareps (1) \\
    & = & (\wvareps\,^{-1} \otimes \wvareps\,^{-1})\circ m^{\ast} (\varepsilon(1) 1^{\ast}+\varepsilon(X)X^{\ast}) \\
    & = & (\wvareps\,^{-1} \otimes \wvareps\,^{-1}) (\varepsilon(1) \left(1^{\ast}\otimes 1^{\ast} + b X^{\ast}\otimes X^{\ast}\right) + \varepsilon(X)\left(1^{\ast}\otimes X^{\ast}+X^{\ast}\otimes 1^{\ast} + a X^{\ast}\otimes X^{\ast}\right) \\ 
    & = & (\wvareps\,^{-1} \otimes \wvareps\,^{-1})\left(\varepsilon(1)1^{\ast}+\varepsilon(X)X^{\ast}\right)\otimes 1^{\ast} +\left(\varepsilon(X)1^{\ast}+\varepsilon(X^2)
    X^{\ast}\right)\otimes X^{\ast}.
\end{eqnarray*}
We compute
\[
\wvareps\,^{-1} (\varepsilon(1)1^{\ast}+\varepsilon(X)X^{\ast}) = \wtdelta^{-1}(\varepsilon(1)(\varepsilon(X^2)1-\varepsilon(X)X)+\varepsilon(X)(-\varepsilon(X)1+\varepsilon(1)X)) = 1, 
\]
and 
\[
\wvareps\,^{-1} (\varepsilon(X)1^{\ast}+\varepsilon(X^2)
    X^{\ast}) = \wtdelta^{-1}\left(\varepsilon(X)(\varepsilon(X^2)1-\varepsilon(X)X) + \varepsilon(X^2) (-\varepsilon(X)1+\varepsilon(1)X)\right) = X. 
\]
Thus, 
\[
\Delta(1) = 1 \otimes \wvareps\,^{-1}(1^{\ast}) + X \otimes \wvareps\,^{-1}(X^{\ast})= \wtdelta^{-1}\left( 1\otimes (\varepsilon(X^2)1-\varepsilon(X)X ) + X \otimes (- \varepsilon(X)1 + \varepsilon(1)X)
\right) 
\]
and we arrive at the formula
\begin{equation}\label{eq_delta_one}
\Delta(1) = \wtdelta^{-1}\left( 
\varepsilon(X^2)1\otimes 1 -\varepsilon(X)(1\otimes X+X\otimes 1)+\varepsilon(1)X\otimes X\right) 
\end{equation}
in the Frobenius $K$-algebra $A_K$. Rewrite this formula as 
\begin{equation}\label{eq_delta2_one}
\Delta(1) = \wtdelta^{-1}\left( 
1\otimes(\varepsilon(X^2) 1-\varepsilon(X)X) +X\otimes (-\varepsilon(X)1 +\varepsilon(1) X)\right).
\end{equation}
It tells us that the dual to the basis $(1,X)$ of $A_K$ is the basis 
\begin{equation}
    \wtdelta^{-1}\left(\varepsilon(X^2) 1-\varepsilon(X)X, -\varepsilon(X)1 +\varepsilon(1) X\right). 
\end{equation}

Map $\wvareps\,^{-1}:A_K^{\ast}\lra A_K$ on $K$-vector spaces should restrict to a map $\wvareps\,^{-1}:A^{\ast}\lra A$. From \eqref{eq_integrally}  this means 
\[
\wvareps\,^{-1}(1^{\ast}), \wvareps\,^{-1}(z^{-1}u X^{\ast})\in \mcO 1 \oplus \mu X, \ \forall u \in \mu. \ 
\]
Equation \eqref{eq_eps_inv} gives 
\begin{equation}
\wvareps\,^{-1}(1^{\ast}) = \wtdelta^{-1}\left(\varepsilon(X^2)1 - \varepsilon(X)X\right), \ \ 
\wvareps\,^{-1}(z^{-1}u X^{\ast}) = \wtdelta^{-1}z^{-1}u \left(-\varepsilon(X)1 + \varepsilon(1)X\right)
\end{equation}
and the integrality conditions are 
\begin{equation}\label{eq_int_cond}
    \wtdelta^{-1}\varepsilon(X^2)\in \mcO, \ \ \wtdelta^{-1}\varepsilon(X)\in \mu, \ \ \wtdelta^{-1}z^{-1}u \varepsilon(X)\in \mcO, \ \ \wtdelta^{-1}z^{-1}u \varepsilon(1)\in \mu, \ \ \  \forall u\in \mu, 
\end{equation}
where $\wtdelta$ is given by \eqref{eq_M_wtdelta}. These conditions can be  rewriten as 
\begin{equation}\label{eq_three_conditions}
    \varepsilon(X^2)\in \wtdelta\,\mcO, \ \ 
    \varepsilon(X) \in \wtdelta\,\mu, \  \  
    \varepsilon(1)\in \wtdelta z \,\mcO. 
\end{equation}

Let us go back to formulas \eqref{eq_integrally} and \eqref{eq_tildeeps} and continue with the integrality constraints. 
Integrally, $\wvareps(1)\in A^{\ast}$, and for $\wvareps(uX)\in A^{\ast}$, $u\in \mu$, one additionally needs 
\[
\varepsilon(u X) = u \varepsilon(X) 1^{\ast}+ u(a\varepsilon(X)+b\varepsilon(1))X^{\ast}\in \mcO 1^{\ast}\oplus z^{-1}\mu X^{\ast}, \ \ u\in \mu. 
\]
Note that $\varepsilon(X^2)=a\varepsilon(X)+b\varepsilon(1)$.
Equivalently, we need $u \varepsilon( X) \in \mcO$, which holds due to \eqref{eq_integrality_cond}, and $u(a\varepsilon(X)+b\varepsilon(1))\in z^{-1}\mu$ since 
\[u a \varepsilon(X)\in \mu z^{-1}\mu z^{-1}\mu = z^{-1}\mu 
\]
and 
\[
u b \,\varepsilon(1) \in \mu z^{-1}\mcO  = z^{-1}\mu .
\] 
Next, for $\wvareps:A\lra A^{\ast}$ to be an isomorphism, one needs $1^{\ast}$ and $z^{-1}u X^{\ast}$ for $u\in \mu$ to be in the image of $\wvareps$. 

\vspace{0.1in} 

I. For $1^{\ast}$ to be in the image of $\wvareps$, one needs $\wvareps(c1+dX)=1^{\ast}$ for some $c$ and $d$. 
That is,
\begin{align*}
\varepsilon_{c1+dX}(1) = \varepsilon (c1+dX) = c\varepsilon(1)+d\varepsilon (X) = 1
\end{align*}
and
\begin{align*}
\varepsilon_{c1+dX}(X) &= \varepsilon((c1+dX)X) = \varepsilon(cX+dX^2) = \varepsilon (cX+d(aX+b)) \\ 
&= c\varepsilon(X) + ad \varepsilon(X) + bd \varepsilon(1) = 0.
\end{align*} 
Thus, for some $c\in \mcO, d \in \mu$, the two equations 
\begin{equation}
c \varepsilon(1) + d \varepsilon(X)=1, 
\ \ 
c\varepsilon(X) + d(a \varepsilon(X)+ b \varepsilon(1))=0 
\end{equation}
should hold, or, in matrix form, they are written as 
\begin{equation}\label{eq_A1_correct}
\begin{pmatrix} c & d \\ bd  & ad+c 
\end{pmatrix} \, \begin{pmatrix} \varepsilon(1) \\  \varepsilon(X)\end{pmatrix} = \begin{pmatrix} 1 \\ 0 
\end{pmatrix}.
\end{equation}
A useful parameter is 
$t=a \varepsilon(X)+b\varepsilon(1)=\varepsilon(X^2)$, also see \eqref{eqn_parameter_t}. 

\vspace{0.1in} 

II. For $z^{-1}u X^{\ast}$ to be in the image of $\wvareps$, where $u\in \mu$, one needs $c'(u)=u c'(1)\in \mcO$ and $d'(u)=u d'(1)\in \mu$, where $\wvareps(c'(u)+d'(u)X)= z^{-1}u X^{\ast}$. Here, $c':\mu\lra\mcO$ and $d':\mu\lra \mu$ are suitable $\mcO$-linear maps that describe elements mapped to multiples of $X^{\ast}$ by $\wvareps$. These extend to $\mcO$-linear maps $c':\mcO\lra z^{-1}\mu$ and $d':\mcO\lra \mcO$.

Similar to Case I, we have 
\begin{align*}
\varepsilon_{c'(u)+d'(u)X}(1) 
= \varepsilon(c'(u)+d'(u)X) 
= c'(u) \varepsilon(1) + d'(u)\varepsilon(X) = 0,
\end{align*}
and 
\begin{align*}
\varepsilon_{c'(u)+d'(u)X}(X)
&= \varepsilon(c'(u)X+d'(u)X^2) 
= \varepsilon(c'(u)X+d'(u)(aX+b)) \\
&= c'(u)\varepsilon(X) + ad'(u)\varepsilon(X) + bd'(u)\varepsilon(1) = uz^{-1}.
\end{align*}
These relations give us  
\begin{equation}\label{eq_A3_correct}
\begin{pmatrix} c'(u) & d'(u) \\ b d'(u)  & a d'(u)+c'(u)
\end{pmatrix} \, \begin{pmatrix} \varepsilon(1) \\  \varepsilon(X)\end{pmatrix} = \begin{pmatrix} 0 \\ u z^{-1}
\end{pmatrix}, 
\end{equation} 
Since $c'(u)$ and $d'(u)$ are linear in $u$, we can factor $u$ out and use shorthand notation 
\begin{equation}\label{eq_cdprime}
    c':=c'(1)\in z^{-1}\mu, \ \  \  d':=d'(1)\in\mcO
\end{equation} 
to get 
\begin{equation}\label{eq_A_correct}
\begin{pmatrix} c' & d' \\ b d'  & a d'+c'
\end{pmatrix} \, \begin{pmatrix} \varepsilon(1) \\  \varepsilon(X)\end{pmatrix} = \begin{pmatrix} 0 \\ z^{-1}
\end{pmatrix}. 
\end{equation}
Write this as 
\[
c' \varepsilon(1) + d' \varepsilon(X)=0, 
\ \ 
c'\varepsilon(X) + d'(a \varepsilon(X)+ b\varepsilon(1))=z^{-1}. 
\]

 {\bf Summary I:} {\it Given  $a,b,\varepsilon(1),\varepsilon(X)$ as in \eqref{eq_integrality_cond} and copied below:
\begin{equation}\label{eq_integrality_cond_copy_1}
(a,b,\varepsilon(1),\varepsilon(X)), \ \ \  
a\in z^{-1}\mu, \  \ 
b\in z^{-1}\mcO, \ \ 
\varepsilon(1)\in \mcO, \ \ 
\varepsilon(X)\in z^{-1}\mu, 
\end{equation}
also define 
\begin{equation}
\label{eqn_parameter_t}
t := a \varepsilon(X)+b\varepsilon(1)=\varepsilon(X^2),
\end{equation}
where the second equality is obtained by applying $\varepsilon$ to \eqref{eq_mult}
(in particular, $t\in z^{-1}\mcO$),
and look for $c\in \mcO,d\in \mu,c'\in z^{-1}\mu,d'\in \mcO$ such that matrix relations \eqref{eq_A1_correct} and \eqref{eq_A_correct} hold. These are linear relations 
\begin{eqnarray}\label{eq_four_one_corr}
   c \varepsilon(1)+d \varepsilon(X) & = & 1, \\
   \label{eq_four_two_corr}
   c\varepsilon(X) + d t& = & 0 , \\
   \label{eq_four_three_corr}
    c' \varepsilon(1) + d' \varepsilon(X) & = & 0, \\
    \label{eq_four_four_corr}
    c' \varepsilon(X)+d' t & = & z^{-1},
\end{eqnarray}
which can also be written as matrix equations (cf.~\eqref{eq_M_wtdelta})
\begin{equation}\label{eq_2-by-2}
   \begin{pmatrix} \varepsilon(1) & \varepsilon(X) \\ \varepsilon(X) & \varepsilon(X^2) \end{pmatrix}
    \begin{pmatrix} c \\ d \end{pmatrix} = 
    \begin{pmatrix} 1 \\ 0 \end{pmatrix}, \ \  \begin{pmatrix} \varepsilon(1) & \varepsilon(X) \\ \varepsilon(X) & \varepsilon(X^2) \end{pmatrix}
    \begin{pmatrix} c' \\ d' \end{pmatrix} = 
    \begin{pmatrix} 0 \\ z^{-1} \end{pmatrix}.
\end{equation}
}

\begin{center}
\begin{tabular}{ |c|c|c|c| } 
\hline
\multicolumn{4}{|c|}{Integrality conditions on the parameters} \\
\hline 
$z^{-1}\mcO$ & $z^{-1}\mu$ & $\mcO$ & $\mu$ \\
\hline
\hline
$b$ &  $a$  &   &  \\ 
    & $\varepsilon(X)$ & $\varepsilon(1)$ &   \\  
\hline 
 &   & $c$ & $d$ \\ 
& $c'$ & $d'$ &  \\ 
\hline
$t$ & & & \\ 
\hline
\end{tabular}
\end{center}
In the table above, the ideals decrease from left to right: $z^{-1}\mcO\supset z^{-1}\mu\supset \mcO\supset \mu$. The four columns describe where corresponding parameters should lie. We have different rows for the two different quadruples of parameters. In the relations above, $c$ and $d$ interact only with the first four parameters (before the change of variables using $t$), and likewise for $c'$ and $d'$. 

Multiply \eqref{eq_four_one_corr} by $\varepsilon(X)$ and subtract from it   \eqref{eq_four_two_corr} multiplied by $\varepsilon(1)$ to obtain 
\begin{equation}\label{eq_d_quad}
    d\left(\varepsilon(X)^2-\varepsilon(1)\varepsilon(X^2)\right)= \varepsilon(X).
\end{equation}
(Recall that $\varepsilon(X)^2-\varepsilon(1)\varepsilon(X^2)=-\det(M)=-\wtdelta$, see \eqref{eq_M_wtdelta}.) 
Next, write 
\eqref{eq_four_three_corr} as 
$\varepsilon(X)d'=-c'\varepsilon(1)$ and insert into \eqref{eq_four_four_corr} multiplied by $\varepsilon(X)$ to get 
\begin{equation}\label{eq_cprime_quad}
c'\left(\varepsilon(X)^2-\varepsilon(1)\varepsilon(X^2)\right)= z^{-1}\varepsilon(X).
\end{equation}
From \eqref{eq_d_quad} and \eqref{eq_cprime_quad}, we obtain
\begin{equation}
    (d-zc')\left(\varepsilon(X)^2-\varepsilon(1)\varepsilon(X^2)\right)=0. 
\end{equation}
Consequently, either (I) $d=zc'$ or (II) $\varepsilon(X)^2=\varepsilon(1)\varepsilon(X^2)$.
If (II) holds, then the $2\times 2$ matrix in \eqref{eq_2-by-2} has rank 1 over the field of fractions $K$, and the system of equations in \eqref{eq_2-by-2} cannot both hold. Hence, 
\begin{equation}\label{eq_d_cprime}
 d \ = \ zc'
\end{equation}
and 
\begin{equation}\label{eq_t_not_zero}
 -\wtdelta=\varepsilon(X)^2-\varepsilon(1)\varepsilon(X^2) \not=  0.
\end{equation}
Now write 
\eqref{eq_four_three_corr} as 
\begin{equation}
\label{eqn_dprimez_two}
d \varepsilon(1) + d'z \varepsilon(X)=0
\end{equation}
and 
\eqref{eq_four_four_corr} as
\begin{equation}
\label{eqn_dprimez}
d \varepsilon(X)+ d' z\varepsilon(X^2)  = 1.
\end{equation}
Rescaling \eqref{eqn_dprimez} by $\varepsilon(1)$ and \eqref{eqn_dprimez_two} by $\varepsilon(X)$ 
and then subtracting one from the other gives 
\[
d'z \left(\varepsilon(X^2)\varepsilon(1) -  \varepsilon(X)^2\right) = \varepsilon(1).
\]
That $d'$ defined by this equation is in $\mcO$, see the table above, is equivalent to the third condition in \eqref{eq_three_conditions}. 

We can express $c'$ and essentially reduce $d'$ to $d$ via the equations
\begin{equation}\label{eq_reduction_to_d}
c'=z^{-1}d, \ \ \  d'\varepsilon(X)z = - d \varepsilon(1) 
\end{equation}
since $z\in K$ for the former and using \eqref{eq_d_cprime} and \eqref{eq_four_three_corr} for the latter.

(i) If $\varepsilon(X)\not=0$, we can express both $c',d'$ via $d$, and equations \eqref{eq_four_three_corr} and \eqref{eq_four_four_corr} become redundant.  
Integrality condition on $c'$ reduces to that for $d$, integrality condition on $d'$ becomes 
\begin{equation}\label{eq_instead_cdprime}
   d \varepsilon(1) \varepsilon(X)^{-1}\in (z).
\end{equation}
We can then remove $c',d'$ from the list of our parameters, remove \eqref{eq_four_three_corr}, \eqref{eq_four_four_corr} and add integrality condition \eqref{eq_instead_cdprime}.

(ii) Case $\varepsilon(X)=0$ is more straightforward, implying $d=0$ and $c'=0$, via \eqref{eq_d_quad} and \eqref{eq_cprime_quad}, respectively. 

\vspace{0.07in} 

For now let us consider both cases at once and rescale the parameters to reduce them to elements of ideals $\mcO$ and $\mu$. 
Rescale the parameters as follows
\begin{equation}\label{eq_param_shift_new}
b=z^{-1} \undb, \ \  a = z^{-1}\unda, \ \
\varepsilon(X) = z^{-1}\undeps_X, 
\ \  t=z^{-1}\undt
\end{equation}
so that
\[
\undb,\undt\in \mcO, \ \ \ \ \unda,\undeps_X\in \mu, \ \
 d\: \varepsilon(1)\:  \in \undeps_X\mcO.
\]
The corresponding table is 
\begin{center}
\begin{tabular}{ |c|c|c| } 
\hline
\multicolumn{3}{|c|}{Integrality conditions on rescaled parameters} \\ 
\hline 
 $\mcO$ & $\mu$ & $\undeps_X\mcO$\\
\hline
\hline 
  $\undb$  & $\unda$ & \\ 
$\varepsilon(1)$ & $\undeps_X$ &  \\  
\hline 
  $c,d'$ & $d$ & \\ 
\hline
    $\underline{t}$ &  & $d\varepsilon(1)$ \\
    \hline 
\end{tabular}.
\end{center}
Now 
\begin{equation}\label{eq_undt}
\undt = z^{-1}\unda\: \undeps_X+\undb\varepsilon(1),
\end{equation}
and the relations \eqref{eq_four_one_corr}-\eqref{eq_four_four_corr} become 
\begin{eqnarray}\label{eq_four_one_z}
   c \:\varepsilon(1)+z^{-1}d\: \undeps_X & = & 1, \\
   \label{eq_four_two_z}
   c\:\undeps_X  & = & - d \:\undt , \\
   \label{eq_four_three_z}
   d'\undeps_X & = & - d\: \varepsilon(1) , \\
   \label{eq_four_four_z}
   z^{-1}\, d\, \undeps_X + d'\,\undt & = & 1. 
\end{eqnarray}
Subtracting \eqref{eq_four_one_z} from \eqref{eq_four_four_z}, we can replace the latter by 
\begin{equation}\label{eq_four_four_a}
    d'\, \undt  =  c \,\varepsilon(1).
\end{equation}
Equations~\eqref{eq_four_one_z}-\eqref{eq_four_four_z} 
can also be written in the following matrix form 
\begin{equation}\label{eq_new_matrix}
\begin{pmatrix} \varepsilon(1) & z^{-1}\undeps_X \\ \undeps_X  & \undt 
\end{pmatrix} \, \begin{pmatrix} c \\  d\end{pmatrix} = \begin{pmatrix} 1 \\ 0
\end{pmatrix}, \ \ \ \ 
\begin{pmatrix} \varepsilon(1) & \undeps_X \\ z^{-1}\undeps_X  & \undt 
\end{pmatrix} \, \begin{pmatrix} d \\  d'\end{pmatrix} = \begin{pmatrix} 0 \\ 1
\end{pmatrix}.
\end{equation}

While the multiplication in $A_K$ is given by \eqref{eq_mult}, the multiplication in $A$ is based on multiplying two elements in $\mu X$ and given by 
\begin{equation}
    \label{eq_mult_A}
    u_1 X \, u_2 X = \frac{u_1 u_2}{z}\,(\unda X +\undb), \ \ \  u_1,u_2\in \mu. 
\end{equation}
This shows another use for $\unda,\undb$. 
Note that $u_1u_2z^{-1},\undb\in\mcO$, $\unda\in \mu$, so the right hand side is in $A=\mcO 1\oplus \mu X$. 

\begin{prop}\label{prop_not_equal}
    In any example as above, $\varepsilon(1)\not=0$,  $\undt\not=0$, $c\not=0$, and $d'\not=0$.
\end{prop}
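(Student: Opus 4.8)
The plan is to reduce all four non-vanishing assertions to: the scalar relations \eqref{eq_four_one_corr}--\eqref{eq_four_four_corr} (and their consequence \eqref{eq_four_four_a}), the non-degeneracy input \eqref{eq_t_not_zero} (equivalently $\wtdelta\neq 0$, see \eqref{eq_M_wtdelta}), and one arithmetic fact about the ideal $\mu$ that captures the non-freeness of $A$, namely:

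\emph{Key claim: there is no pair $(d,e)$ with $d\in\mu$, $e\in z^{-1}\mu$ and $de=1$.}

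I would prove this as follows. Since $\mu\mu=(z)$, the fractional ideal $z^{-1}\mu$ equals $\mu^{-1}$. If $d\in\mu$, $e\in\mu^{-1}$ and $de=1$, then $d\neq 0$, the principal fractional ideals satisfy $(d)\subseteq\mu$ and $(e)\subseteq\mu^{-1}$, and $(d)(e)=(de)=\mcO$. As $(d)$ is principal, hence invertible, the inclusion $(d)\subseteq\mu$ yields $\mu^{-1}\subseteq (d)^{-1}=(e)$; together with $(e)\subseteq\mu^{-1}$ this forces $(e)=\mu^{-1}$, hence $(d)=\mu$. Thus $\mu$ would be a principal ideal, contradicting $\mu\not\cong\mcO$. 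This is the only step where non-freeness of $A$ is genuinely used; the rest is formal, so I expect the main obstacle to be simply recognizing that non-principality of $\mu$ is exactly the obstruction one needs (and being careful that both constraints $d\in\mu$ \emph{and} $e\in z^{-1}\mu$ are essential — dropping either allows a solution).

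Granting the key claim, the four statements follow quickly. If $\varepsilon(1)=0$, then \eqref{eq_four_one_corr} reads $d\,\varepsilon(X)=1$; since $d\in\mu$ (one of the unknowns in Summary~I) and $\varepsilon(X)\in z^{-1}\mu$ by \eqref{eq_integrality_cond}, this contradicts the key claim, so $\varepsilon(1)\neq 0$. Similarly, if $c=0$, then \eqref{eq_four_one_corr} again gives $d\,\varepsilon(X)=1$, a contradiction, so $c\neq 0$. Finally, \eqref{eq_four_four_a} reads $d'\,\undt=c\,\varepsilon(1)$, whose right-hand side is now nonzero, so $d'\neq 0$ and $\undt\neq 0$ at once. (As a consistency check one can instead solve the systems \eqref{eq_2-by-2} over the field of fractions using $\wtdelta\neq 0$, obtaining the closed forms $c=\varepsilon(X^2)/\wtdelta$, $d=-\varepsilon(X)/\wtdelta$, $c'=-z^{-1}\varepsilon(X)/\wtdelta$, $d'=z^{-1}\varepsilon(1)/\wtdelta$; these make the equivalences $c=0\Leftrightarrow \undt=0$ and $d'=0\Leftrightarrow\varepsilon(1)=0$ transparent, giving an alternative route once $\varepsilon(1)\neq 0$ is established, and re-deriving $\undt\neq 0$ via \eqref{eq_four_two_corr} and \eqref{eq_t_not_zero}.)
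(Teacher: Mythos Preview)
Your proof is correct, and the core arithmetic input---that $\mu$ non-principal forbids $de=1$ with $d\in\mu$, $e\in z^{-1}\mu=\mu^{-1}$---is exactly what the paper uses. The organization, however, differs in a useful way. For $\varepsilon(1)\neq 0$ your argument and the paper's coincide (the paper phrases it as the chain $(z)=(d)(\undeps_X)\subset(d)\mu\subset\mu^2=(z)$ forcing a non-principal ideal to be principal). For the remaining three the routes diverge: the paper first gets $\undt\neq 0$ directly from the nondegeneracy $\wtdelta\neq 0$ in \eqref{eq_t_not_zero}, then uses $\undt\neq 0$ in \eqref{eq_four_two_z} to deduce $c=0\Rightarrow d=0\Rightarrow 0=1$, and finally handles $d'\neq 0$ by re-running the ideal argument on $d\undeps_X=z$ from \eqref{eq_four_four_z}. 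You instead apply your key claim a second time to $d\,\varepsilon(X)=1$ to get $c\neq 0$ without touching $\undt$, and then read off both $\undt\neq 0$ and $d'\neq 0$ simultaneously from the single relation $d'\undt=c\,\varepsilon(1)$ of \eqref{eq_four_four_a}. Your packaging is a bit more economical; the paper's has the minor advantage that $\undt\neq 0$ is exhibited as a direct consequence of nondegeneracy alone, independent of the integrality constraints on $c,d,d'$.
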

\begin{proof}
I. Assume $\varepsilon(1)=0$. Then $\undt=z^{-1}\unda \: \undeps_X$ and  equation \eqref{eq_four_one_z} becomes $d\undeps_X=z$, giving us equality of principal ideals $(d)(\undeps_X)=(d \undeps_X)=(z)$. There are proper inclusions of ideals $(d)\subset \mu$ and $(\undeps_X)\subset \mu$. These produce a chain of inclusions of ideals 
\[
(z)=(d)(\undeps_X) \subset (d)\mu  \subset   \mu\mu = (z),  
\]
implying that all ideals in the chain are equal. But $(z)$ is principal and $(d)\mu$ is not principal, giving a contradiction. Hence $\varepsilon(1)\not=0.$

\vspace{0.05in}

II. Assume $\undt=0$. We already know from \eqref{eq_t_not_zero} that $t=\varepsilon(X)^2-\varepsilon(1)\varepsilon(X^2)\not=0$, so that $\undt\not=0$. Here is another proof: from \eqref{eq_four_two_z},  
$c\,\undeps_X=0$. If $c=0$, from Equation \eqref{eq_four_one_z},  $d\undeps_X=z$. We get a contradiction, as in the proof of part I, since $d,\undeps_X\in \mu$.  Consequently, $\undeps_X=0$. Equation \eqref{eq_four_four_z} then implies $d'\undt=1$, which is a contradiction since we have assumed $\undt=0$.

\vspace{0.05in}
 
III. 
If $c=0$, then \eqref{eq_four_two_z} results in  $d=0$, giving a contradiction with \eqref{eq_four_one_z}, $0=1$. So $c\not=0$.

\vspace{0.05in}

IV. If $d'=0$, then \eqref{eq_four_four_z} reduces to $d\undeps_X=z$, leading to a contradiction as in part I. 
\end{proof}

\begin{remark}[Twistings] 
\label{rm_twistings}  
Given a 2D Frobenius algebra $(A,\mu,\varepsilon)$ as above, it admits the following modifications and parameter changes: 
\begin{enumerate}
    \item\label{item_invertibleLambdaNot} Change of variables $X\mapsto\lambda_0X$, given an invertible $\lambda_0\in \mcO^{\times}$: 
    \begin{equation}\label{eq_twist_one}
    a\mapsto \lambda_0a, \ \ \  
    b\mapsto \lambda_0^2b, \ \ \ 
    \varepsilon(X)\mapsto \lambda_0\varepsilon(X), \ \ \ 
    \varepsilon(1)\mapsto \varepsilon(1). 
    \end{equation} 
    \item\label{item_additive_LambdaOne} Change of variables $X\mapsto X+\lambda_1$, for $\lambda_1\in z^{-1}\mu$. This changes
    \begin{equation}\label{eq_twist_two}
    a \mapsto a+2\lambda_1, \ \ \ 
    b \mapsto b + \lambda_1^2.  \  \ 
    \end{equation}
    \item\label{item_invertibleLambda} For an invertible $\lambda\in \mcO^{\times}$ rescale 
    \begin{equation}\label{eq_twist_three}
    \varepsilon \mapsto \lambda\varepsilon,  \ \ \ 
    \Delta \mapsto \lambda^{-1}\Delta. 
    \end{equation}
\end{enumerate}
Invertible linear changes \eqref{item_invertibleLambdaNot} and \eqref{item_additive_LambdaOne} of $X$  together generate an action of the affine group $\Aff(\mcO,\mu)$ of symmetries ($X\mapsto aX+b$) on the set of Frobenius algebras as above with $X$ fixed. 
\end{remark}

%
%

\section{Some solutions}

Let us now consider two cases, depending on whether or not $\varepsilon(X)$ vanishes. 

\vspace{0.07in}


\subsection{Case \texorpdfstring{$\varepsilon(X)=0$}{epsilonXEQUALSzero}}
\label{subsec_caseone}


\begin{prop}\label{prop_undeps}
    Assume that in an example as above, $\undeps_X=0$. Then 
    \begin{equation}\label{eq_undepsX_zero}
    c'=d=0, \ \ \unda\in\mcO,\ \ \undb,\varepsilon(1)\in \mcO^{\times}, \ \ \undt=\undb\varepsilon(1)\in \mcO^{\times}, \ \ c=\varepsilon(1)^{-1}\in \mcO^{\times}, \ \ d'=\undb^{-1}\varepsilon(1)^{-1}\in \mcO^{\times}.
    \end{equation}
    Each of these data provides a commutative Frobenius algebra 
    \begin{equation}\label{eq_when_eXiszero}
    A=\mcO 1\oplus \mu X, \ \ 
    \mu^2=(z), \ \ X^2=z^{-1}\unda X+z^{-1}\undb, 
    \end{equation}
    uniquely determined by the ideal $\mu$, by $z$, by $\unda\in\mcO$ and $\undb,\varepsilon(1)\in\mcO^{\times}$ via the above formulas.
\end{prop}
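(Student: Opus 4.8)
The plan is to prove the two halves of the statement in turn: that $\undeps_X=0$ forces the listed values, and that, conversely, every such datum $(\mu,z,\unda,\undb,\varepsilon(1))$ produces a commutative Frobenius $\mcO$-algebra.

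For the forward half I would simply set $\undeps_X=0$ in the four linear relations \eqref{eq_four_one_z}--\eqref{eq_four_four_z} and in \eqref{eq_undt} and read off the consequences in order. Relation \eqref{eq_four_four_z} degenerates to $d'\,\undt=1$; since $d',\undt\in\mcO$ and $\mcO$ is a domain, this forces $\undt\in\mcO^{\times}$ and $d'=\undt^{-1}$ (in particular $\undt\neq0$, as in Proposition~\ref{prop_not_equal}). Knowing $\undt\neq0$, relation \eqref{eq_four_two_z} becomes $d\,\undt=0$, hence $d=0$, and then $c'=z^{-1}d=0$ by \eqref{eq_d_cprime}; with $d=0$ relation \eqref{eq_four_three_z} is vacuous. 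Relation \eqref{eq_four_one_z} reduces to $c\,\varepsilon(1)=1$, so $\varepsilon(1)\in\mcO^{\times}$ and $c=\varepsilon(1)^{-1}$. Finally \eqref{eq_undt} reads $\undt=\undb\,\varepsilon(1)$, giving $\undb=\undt\,\varepsilon(1)^{-1}\in\mcO^{\times}$ and $d'=\undt^{-1}=\undb^{-1}\varepsilon(1)^{-1}$. The element $\unda$ stays free: once $\undeps_X=0$ it occurs in none of \eqref{eq_four_one_z}--\eqref{eq_four_four_z} nor in \eqref{eq_undt}, so it ranges over all of $\mu$ (the integrality requirement $\unda\in\mu$ from \eqref{eq_integrality_cond}, in particular $\unda\in\mcO$). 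Together with \eqref{eq_mult_A} specialized at $\undeps_X=0$, namely $X^2=z^{-1}\unda X+z^{-1}\undb$, this gives \eqref{eq_undepsX_zero} and the algebra \eqref{eq_when_eXiszero}.

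For the converse I would fix an ideal $\mu\subset\mcO$ with $\mu^2=(z)$, an element $\unda\in\mu$, and units $\undb,\varepsilon(1)\in\mcO^{\times}$, and define $A=\mcO 1\oplus\mu X$ with unit $1$, multiplication $X^2=z^{-1}\unda X+z^{-1}\undb$, and $\mcO$-linear trace carrying the prescribed value $\varepsilon(1)$ and $\varepsilon(X)=0$. Two things then need checking. First, the product closes inside $A$: this is the computation \eqref{eq_mult_A}, using that $u_1u_2z^{-1}\in\mcO$ for $u_1,u_2\in\mu$, so that $u_1u_2z^{-1}\unda\in\mu$ and $u_1u_2z^{-1}\undb\in\mcO$; commutativity and associativity descend from $A_K=K[X]/(X^2-z^{-1}\unda X-z^{-1}\undb)$. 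Second, $\varepsilon$ is nondegenerate, i.e.\ the map $\wvareps$ of \eqref{eq_tildeeps} is an isomorphism of $\mcO$-modules. Since $\varepsilon(X)=0$, the matrix $M$ of \eqref{eq_M_wtdelta} is diagonal with entries $\varepsilon(1)$ and $\varepsilon(X^2)=z^{-1}\undb\,\varepsilon(1)$, and by the lattice description \eqref{eq_integrally} the map $\wvareps$ sends $1\mapsto\varepsilon(1)\,1^{\ast}$ and $uX\mapsto u\,z^{-1}\undb\,\varepsilon(1)\,X^{\ast}$ for $u\in\mu$; as $\varepsilon(1)$ and $\undb\,\varepsilon(1)$ are units, $\varepsilon(1)\,\mcO=\mcO$ and $z^{-1}\undb\,\varepsilon(1)\,\mu=z^{-1}\mu$, so $\wvareps$ carries $\mcO 1\oplus\mu X$ isomorphically onto $\mcO\,1^{\ast}\oplus z^{-1}\mu\,X^{\ast}=A^{\ast}$. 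Hence $(A,\mu,\varepsilon)$ is a commutative Frobenius $\mcO$-algebra; since it is read off directly from $(\mu,z,\unda,\undb,\varepsilon(1))$, the uniqueness clause follows as well.

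I do not expect a genuine obstacle: once the apparatus of Section~2 is in place, the whole argument is bookkeeping. The two steps that want a little care are the repeated upgrade of an identity such as $c\,\varepsilon(1)=1$ or $d'\,\undt=1$, with both factors in $\mcO$, to the conclusion that each factor is a unit (for which the domain property of $\mcO$ is enough), and, in the converse, the observation that nondegeneracy \emph{over $\mcO$} --- not merely over $K$ --- is exactly the demand that the diagonal entries of $M$ be $\varepsilon(1)\in\mcO^{\times}$ and $\varepsilon(X^2)\in z^{-1}\mcO^{\times}$; the hypotheses $\undb,\varepsilon(1)\in\mcO^{\times}$ are precisely what guarantees this, and their necessity is exactly what the forward half extracts.
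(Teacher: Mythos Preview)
Your proof is correct and follows essentially the same bookkeeping route as the paper. The only cosmetic difference is ordering: you first read off $d'\,\undt=1$ from \eqref{eq_four_four_z} to get $\undt\in\mcO^{\times}$ (hence $\undt\neq0$), whereas the paper first invokes Proposition~\ref{prop_not_equal} for $\undt\neq0$ to force $d=0$ from \eqref{eq_four_two_z}, and only afterwards extracts $d'\undt=1$; likewise you obtain $c'=0$ via $c'=z^{-1}d$ from \eqref{eq_d_cprime}, while the paper uses the unrescaled relation $c'\varepsilon(1)=0$ together with $\varepsilon(1)\neq0$. Your treatment of the converse (closure of multiplication via \eqref{eq_mult_A} and the diagonal form of $\wvareps$) is in fact more explicit than the paper's, which essentially leaves that direction to the reader.
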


\begin{proof}
Assume $\undeps_X=0$. Then $d\undt=0$ by \eqref{eq_four_two_z}. Since $\undt\not=0$, we get $d=0$. Then $c\varepsilon(1)=1$ by \eqref{eq_four_one_z}, so that $\varepsilon(1)\in\mcO^{\times}$ and $c=\varepsilon(1)^{-1}$. The last two equations~\eqref{eq_four_three_z} and \eqref{eq_four_four_z} give $\underline{c'}\varepsilon(1)=0$, $d'\undt=1$. Consequently, $\underline{c'}=0$ and $\undt\in\mcO^{\times}$, $d'=\undt^{-1}$. Then \eqref{eq_undt} gives $\undt=\undb\varepsilon(1)$. This requires $\undb\in \mcO^{\times}$. Summarizing, in this case: 
\[
\undeps_X=c'=d=0,\ \unda\in\mcO, \ \varepsilon(1),\undb\in \mcO^{\times}, \ \undt=\undb\varepsilon(1)\in \mcO^{\times}, \  c=\varepsilon(1)^{-1}\in \mcO^{\times}, \ d'=\undb^{-1}\varepsilon(1)^{-1}\in \mcO^{\times}. 
\]
This concludes the proof. 
\end{proof}
 
{\it 
 Thus, there is a family of solutions to the problem of constructing a Frobenius $\mcO$-algebra $A=\mcO 1\oplus \mu X$, with the square of the generator $X$ given by \eqref{eq_mult}, with a non-principal ideal $\mu\subset \mcO$ with $\mu^2=(z)$ and trace $\varepsilon$ as in \eqref{eq_integrality_cond}. With the additional condition that $\varepsilon(X)=0$, the solutions are classified by $\unda\in \mcO$ and $\varepsilon(1),\undb\in\mcO^{\times}$, with the parameters given by \eqref{eq_undepsX_zero} and \eqref{eq_param_shift_new}, see also \eqref{eq_when_eXiszero}. 
The trace $\varepsilon$ is $0$ on $\mu X$, and the restriction of $\varepsilon$ to $\mcO 1$ is an isomorphism $\mcO\lra\mcO$, $x\mapsto \varepsilon(1)x$. Isomorphism 
${\wvareps}: A\lra A^{\ast}$ in \eqref{eq_wvareps} is given by 
\[
\wvareps(1) = \varepsilon(1)1^{\ast}, \ \ \  \wvareps(uX) = z^{-1}\undb  \varepsilon(1)\, u X^{\ast},  \ \ u\in \mu. 
\]
This construction works for any Dedekind domain $\mcO$ with a non-principal ideal $\mu$ whose square is a principal ideal. 
 }

\begin{remark}
    The comultiplication map $\Delta:A\lra A\otimes A$ is dual to the multiplication  via the formula in~\eqref{xymatrix_epsilontilde_maps}. Since it is a map of $A$-bimodules, it is enough to compute $\Delta(1)$. One can first compute it in the field of fractions $K$  to get 
    \begin{equation}\label{eq_delta1_case0}
    \Delta(1) = \varepsilon(1)^{-1}(1\otimes 1+b^{-1}X\otimes X). 
    \end{equation}
    Also note that \eqref{eq_delta2_one} in case $\varepsilon(X)=0$ gives 
    \begin{eqnarray*}
    \Delta(1) & = & \wtdelta^{-1}\left( 
1\otimes\varepsilon(X^2) 1 +X\otimes \varepsilon(1) X\right) = \varepsilon(1)^{-1}1\otimes 1 + \varepsilon(X^2)^{-1}X\otimes X \\
 & = & 
\varepsilon(1)^{-1}\left( 1\otimes 1 + \varepsilon(1)\varepsilon(X^2)^{-1}X\otimes X\right).
    \end{eqnarray*}
    We have $b=\varepsilon(X^2)\varepsilon(1)^{-1}$ for this case. Indeed, from \eqref{eq_undepsX_zero} we get $\undt=\undb \varepsilon(1)$ so that $t=b\varepsilon(1)$ and $t=\varepsilon(X^2)$, resulting in the coefficient $b^{-1}$ next to $X\otimes X$ in \eqref{eq_delta1_case0}.  
    
    In the $\mcO$-algebra $A$ term $X$ is only available when scaled by elements of $\mu$. 
    Pick 2 generators $s_1,s_2$ of the ideal $\mu$, so that $\mu=\mcO s_1+\mcO s_2$. Since $\mu^2=z$, pick elements $s_1',s_2'\in \mu$ so that $s_1s_1'+s_2s_2'=z$. The element $z X\otimes X$ can be replaced by the sum of two product elements in $\mu X\otimes \mu X$ as shown below (and using that $b^{-1}=\undb^{-1}z$) 
    \begin{equation}\label{eq_delta_one_epsXzero}
        \Delta(1) = \varepsilon(1)^{-1}\left(1\otimes 1 + \undb^{-1}(s_1X\otimes s_1' X + s_2 X\otimes s_2'X) \right). 
    \end{equation} 
\end{remark}

\begin{remark}\label{rm_symmetries}
What symmetries can we apply to this class of examples, see Remark~\ref{rm_twistings}? One can rescale $X$ by an element of $\mcO^{\ast}$. Also one can change variables $X\mapsto X+w$, so that $uX \mapsto uX +uw$ and $uw\in \mcO$ for all $u\in \mu$. Hence $w\in z^{-1}\mu$. This modifies the trace and takes us outside the case $\varepsilon(X)=0$ since $\varepsilon(X+w)=\varepsilon(w)=w\varepsilon(1)\not=0$ if $w\not=0$.
\end{remark}


\subsection{Case \texorpdfstring{$\varepsilon(X)$}{varepsilonXnotZero} not equal to \texorpdfstring{$0$}{zero}}

Note that $\varepsilon(X)=z^{-1}\undeps_X$ and $\undeps_X\in\mu$. Recall Proposition~\ref{prop_not_equal}. 

\begin{proposition}
    If $\undeps_X\not=0$, then $\undb\not=0$ and  $d\not=0$.
\end{proposition}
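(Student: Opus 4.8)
The statement has two parts: $\undeps_X \neq 0 \Rightarrow \undb \neq 0$, and $\undeps_X \neq 0 \Rightarrow d \neq 0$. I would prove the contrapositive for each, using the reduced relations \eqref{eq_four_one_z}--\eqref{eq_four_four_z}, \eqref{eq_four_four_a}, and \eqref{eq_undt}, together with Proposition~\ref{prop_not_equal} (so $\varepsilon(1), \undt, c, d' \neq 0$) and the non-principality argument from part I of the proof of Proposition~\ref{prop_not_equal} (a principal ideal cannot equal $(d)\mu$ or $(\undeps_X)\mu$ when $d, \undeps_X \in \mu$).

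\textbf{Step 1: $d \neq 0$.} Suppose $d = 0$. Then \eqref{eq_four_two_z} reads $c\,\undeps_X = 0$; since $c \neq 0$ by Proposition~\ref{prop_not_equal} and $\mcO$ is a domain, $\undeps_X = 0$, contradicting the hypothesis. So $d \neq 0$.

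\textbf{Step 2: $\undb \neq 0$.} Suppose $\undb = 0$. Then \eqref{eq_undt} gives $\undt = z^{-1}\unda\,\undeps_X$. Now use \eqref{eq_four_two_z}: $c\,\undeps_X = -d\,\undt = -d z^{-1}\unda\,\undeps_X$, so $\undeps_X(c + d z^{-1}\unda) = 0$, and since $\undeps_X \neq 0$ we get $c = -dz^{-1}\unda$. Substitute this into \eqref{eq_four_one_z}: $-dz^{-1}\unda\,\varepsilon(1) + z^{-1}d\,\undeps_X = 1$, i.e. $z^{-1}d(\undeps_X - \unda\,\varepsilon(1)) = 1$. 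This exhibits $z^{-1}d\cdot(\undeps_X - \unda\varepsilon(1)) = 1$ as a product equal to a unit. But $d \in \mu$ and $\undeps_X - \unda\varepsilon(1) \in \mu$ (both $\undeps_X$ and $\unda$ lie in $\mu$), so $z^{-1}d(\undeps_X - \unda\varepsilon(1)) \in z^{-1}\mu\mu = z^{-1}(z) = \mcO$, and more precisely $(d)(\undeps_X - \unda\varepsilon(1)) \subseteq \mu^2 = (z)$ forces, upon dividing by $(z)$, that $(d)(\undeps_X-\unda\varepsilon(1)) = (z)$, hence as in part~I of Proposition~\ref{prop_not_equal} the chain $(z) = (d)(\undeps_X - \unda\varepsilon(1)) \subseteq (d)\mu \subseteq \mu\mu = (z)$ forces $(d)\mu = (z)$, which is impossible since $(d)\mu$ is non-principal (as $\mu$ is non-principal and $d \neq 0$) while $(z)$ is principal. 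This contradiction gives $\undb \neq 0$.

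\textbf{Main obstacle.} The delicate point is the divisibility/ideal bookkeeping in Step~2: one must check that $\undeps_X - \unda\varepsilon(1)$ genuinely lies in $\mu$ (not merely in $z^{-1}\mu$) so that the non-principality argument applies, and that the factor $z^{-1}$ is correctly absorbed when passing from the equation $z^{-1}d(\undeps_X-\unda\varepsilon(1))=1$ to the ideal identity $(d)(\undeps_X-\unda\varepsilon(1))=(z)$. Once the parameters are confirmed to sit in the ideals recorded in the rescaled table, the argument is the same non-principality contradiction already used twice in Proposition~\ref{prop_not_equal}, so I expect no further difficulty. It may also be cleanest to first dispose of Step~1 and then, in Step~2, note $d \neq 0$ is available, though Step~2 as written does not actually need it.
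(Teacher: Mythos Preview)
Your proposal is correct and follows essentially the same argument as the paper's proof: both parts proceed by contradiction, using \eqref{eq_four_two_z} with $c\neq 0$ for the $d\neq 0$ claim, and for $\undb\neq 0$ substituting $\undt=z^{-1}\unda\,\undeps_X$ into \eqref{eq_four_two_z} to get $c=-dz^{-1}\unda$, then into \eqref{eq_four_one_z} to obtain $d(\undeps_X-\unda\varepsilon(1))=z$, and finishing with the non-principality chain argument from Proposition~\ref{prop_not_equal}. Your ``main obstacle'' paragraph correctly isolates the one thing to verify, namely that $\undeps_X-\unda\varepsilon(1)\in\mu$, which indeed holds since $\undeps_X,\unda\in\mu$ and $\varepsilon(1)\in\mcO$.
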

\begin{proof} I. If $d=0$, then \eqref{eq_four_two_z} gives $c\undeps_X=0$, so that $\undeps_X=0$ in view of Proposition~\ref{prop_not_equal}, which is a contradiction. 

\vspace{0.05in}


II. Assume $\undb=0$. Then $\undt=z^{-1}\unda \: \undeps_X$ by \eqref{eq_undt}. Inserting this equation into \eqref{eq_four_two_z} produces $(c+d z^{-1}\unda)\undeps_X=0$, so that $c=-d \unda z^{-1}$. Inserting the equation for $c$ in \eqref{eq_four_one_z} gives 
$d(\undeps_X-\unda\varepsilon(1))=z$. 
Since  
both factors $d,\undeps_X-\unda\varepsilon(1)\in \mu$, this gives a contradiction as in Part I in the proof of Proposition~\ref{prop_not_equal}, i.e., the chain
\[
(z)=(d)(\undeps_X- \unda \varepsilon(1)) \subset (d)\mu  \subset   \mu^2 = (z) 
\]
of inclusion of ideals forces a principal and a non-principal ideal to be equal. Thus, $\undb\not=0$.  
\end{proof}

Assume now that $\undeps_X\not=0$. Then $\varepsilon(X)\not=0$ and 
\[
d=(1-c\varepsilon(1))\varepsilon(X)^{-1}, \ \ \ 
c\varepsilon(X)^2+(1-c\varepsilon(1))(a\varepsilon(X)+b\varepsilon(1))=0, 
\]
by \eqref{eq_four_one_corr} for the former and for the latter, substitute in $\undt$ in \eqref{eq_undt} into \eqref{eq_four_two_z}, and then substitute for $c$ in \eqref{eq_four_two_z} using \eqref{eq_four_one_z}.
The latter equation simplifies as 
\begin{equation}
\label{eqn_quadratic_epsX_epsOne}
c\left(\varepsilon(X)^2-a\varepsilon(X)\varepsilon(1)-b\varepsilon(1)^2 \right) = - \left(a\varepsilon(X)+b\varepsilon(1) \right).   
\end{equation}
where the left hand side and the right hand side can be rewritten as 
\begin{equation}
    c\left( \varepsilon(X)^2-\varepsilon(1)\varepsilon(X^2)\right) = -\varepsilon(X^2).
\end{equation}
Inserting this expression for $c$ into the formula for $d$, we obtain
\begin{equation}
    d \left( \varepsilon(X)^2-\varepsilon(1)\varepsilon(X^2)\right) =\varepsilon(X).
\end{equation}
Using this formula and \eqref{eq_four_three_z} gives 
\begin{equation}
    d'\left( \varepsilon(X)^2-\varepsilon(1)\varepsilon(X^2)\right) =-z^{-1}\varepsilon(1).
\end{equation} 
We can write the above three equations as
\begin{equation}
    c\,\wtdelta = \varepsilon(X^2), \ \ \ 
    d\,\wtdelta = - \varepsilon(X), \ \ \ 
    d'\, \wtdelta = z^{-1}\varepsilon(1), 
\end{equation}
cf. \eqref{eq_three_conditions}. 
What are the integrality properties of $\varepsilon(X)^2-\varepsilon(1)\varepsilon(X^2)=-\wtdelta $? We have  
\[\varepsilon(X)\in z^{-1}\mu, \  \varepsilon(1)\in \ \mcO, \ \ \varepsilon(X^2)= t\in z^{-1}\mcO.
\]
Consequently, $\wtdelta\in z^{-1}\mcO$. 

\vspace{0.1in} 

\vspace{0.1in}

Let us further specialize to particular values of $\varepsilon(X)$. 


\subsubsection{Case \texorpdfstring{$\varepsilon(X)=1$}{varepsilonone}}
\label{subsec_case_one}

We have $\varepsilon(X)\in z^{-1}\mu$. 
Let us try $\varepsilon(X)=1$ so that $\undeps_X=z$. 
Then $\undt=\unda+\undb\varepsilon(1)$ and 
the relations \eqref{eq_four_one_z}-\eqref{eq_four_three_z} and \eqref{eq_four_four_a} simplify to 
\begin{eqnarray}\label{eq_four_one_simp_r}
   d & = & 1 - c\varepsilon(1), \\
   \label{eq_four_two_simp_r}
   c z   & = & - d \undt , \\
   \label{eq_four_three_simp_r}
    d'z & = & - d\: \varepsilon(1),  \\
    \label{eq_four_four_simp_r}
    d'\, \undt  & =  & c \,\varepsilon(1).
\end{eqnarray}
The last relation follows from the previous two.
Let us further assume that $\mu $ is prime, which we can do if $\mcO$ is a ring of integers of a number field. (Recall that the Chebotarev Density Theorem implies that any element in $\ClK$ can be represented by a prime ideal $\mu\in \mcO_K$, in any number field $K$, see~\cite[Exercise 11.2.8]{ME05} and~\cite{Aphelli,CCC,Guyot}). Relation \eqref{eq_four_three_simp_r} shows that $d\varepsilon(1)\in(z)$, with $d\in \mu$, so there are two possilities: (I) $\varepsilon(1)\in \mu$ or (II) $d\in (z)$. Case (I) implies, via \eqref{eq_four_one_simp_r},  that $1\in \mu$, a contradiction. Consequently, $d\in(z)$ and we can write $d=z\undd$, with $\undd\in \mcO$. 

Let us rewrite \eqref{eq_four_one_simp_r}-\eqref{eq_four_three_simp_r} and the definition of $\undt$ with the parameter $\undd$:
\begin{eqnarray}\label{eq_one_simp_e}
   z\undd & = & 1 - c\varepsilon(1), \\
   \label{eq_two_simp_e}
   c   & = & - \undd\,\undt , \\
   \label{eq_three_simp_e}
    d' & = & - \,\undd\, \varepsilon(1),  \\
    \label{eq_four_simp_e}
     \undt  & =  & \unda + \undb \varepsilon(1).
\end{eqnarray}
Equations \eqref{eq_two_simp_e} and  \eqref{eq_three_simp_e} allow to get rid of  parameters $c$ and $d'$, reducing the relations to 
\begin{eqnarray}\label{eq_one_simp_f}
   1 & = & \undd (z - \undt \varepsilon(1)), \\
    \label{eq_three_simp_f}
     \undt  & =  & \unda + \undb \varepsilon(1).
\end{eqnarray}
Relation \eqref{eq_one_simp_f} implies that $\undd$ and $z - \undt \varepsilon(1)$ are in $\mcO^{\times}$. Using the expression for $\undt$ equation \eqref{eq_one_simp_f} changes to 
\begin{equation}\label{eq_quadratic}
      \undb\varepsilon(1)^2+ \unda \varepsilon(1) -z = -\undd^{-1}.
\end{equation}
This is a single equation left, and 
our integrality conditions reduce to the following:
\begin{center}
\begin{tabular}{ |c|c|c| } 
\hline
\multicolumn{3}{|c|}{Integrality conditions for $\varepsilon(X)=1$} \\ 
\hline 
 $\mcO$ & $\mu$ & $\mcO^{\times}$\\
\hline
\hline 
  $\undb,\varepsilon(1)$  & $\unda$ & $\undd$\\ 
\hline
\end{tabular}.
\end{center}
Any solution to \eqref{eq_quadratic} with the above integrality conditions gives us an example of a Frobenius algebra over $\mcO$ with desired properties. Condition that $\mu$ be prime is not necessary (but we used it to derive the above equation). 

\vspace{0.07in} 

Equation \eqref{eq_quadratic} can also be written as 
\begin{equation}\label{eq_z_undt}
(z- \undt \varepsilon(1))\undd = 1.
\end{equation}

\vspace{0.07in} 

For more concrete examples, let us further assume that $\varepsilon(1)\in\mcO^{\times}$. Then we can solve \eqref{eq_quadratic} for $\undb$:
\begin{equation}\label{eq_quad_undb}
\undb=\varepsilon(1)^{-2} \left( z-\unda \varepsilon(1)  -\undd^{-1} \right) .
\end{equation}
This gives a family of solutions, parametrized by $\unda\in\mu$ and $\varepsilon(1),\undd\in\mcO^{\times}$: 
\begin{itemize}
    \item $z$ is already given, with $\mu^2=(z)$, 
    \item $X^2=z^{-1}\unda X+z^{-1}\undb$, with $\undb$ given by \eqref{eq_quad_undb}, 
    \item $\varepsilon(1)\in\mcO^{\times}$ is one of the parameters, $\varepsilon(X)=1$, 
    \item $c=-\undd\, \undt$, $\undt=\unda+\undb\varepsilon(1)$ and $d=z\undd=z(z-\undt\varepsilon(1))^{-1}$ satisfy $\wvareps(c+dX)=1^{\ast}$, 
    \item $c'=z^{-1}d$ and $d'=-d\varepsilon(1)$ satisfy $\wvareps(c'+d'X)=z^{-1}X^{\ast}$.
\end{itemize}


\subsubsection{Case \texorpdfstring{$\varepsilon(X)$}{varepsilon(X)} is invertible in \texorpdfstring{$\mcO^{\times}$}{Ostar}}

$\quad$ 

Condition $\varepsilon(X)=1$ in Section~\ref{subsec_case_one} above can be rescaled to cover the case when $\varepsilon(X)\in\mcO^{\times}$. If $\varepsilon(X)=\lambda\in\mcO^{\times}$, define the new trace $\varepsilon_{\mathsf{new}}:=\lambda^{-1}\varepsilon$. Likewise, rescale comultiplication to $\Delta_{\mathsf{new}}=\lambda\Delta$, but keep the multiplication in $A$ the same. This modifies the Frobenius algebra structure of $A$ and makes it satisfy $\varepsilon_{\mathsf{new}}(X)=1$, without changing the $\mcO$-module decomposition $A=\mcO 1 \oplus \mu X$.

Alternatively, one can rescale the generator of $A$ and define $X_{\mathsf{new}}:=\lambda^{-1} X$ without changing $\varepsilon$ and $\Delta$. This changes multiplication in $A$ to 
\begin{equation}
    \label{eq_mult_A_lambda}
    u_1 X_{\mathsf{new}} \, u_2 X_{\mathsf{new}} = \frac{u_1 u_2}{z}\,(\lambda^{-1}\unda X_{\mathsf{new}} +\lambda^{-2}\undb), \ \ \  u_1,u_2\in \mu, 
\end{equation}
cf. \eqref{eq_mult_A}. 


\subsection{Other values of \texorpdfstring{$\varepsilon(X)$}{epsilon(X)}}

The integrality condition says that $\varepsilon(X)\in z^{-1}\mu$. 
So far we have found examples with $\varepsilon(X)=0$ and $\varepsilon(X)\in \mcO^{\times}$. Can we find examples where $\varepsilon(X)\not=0$ and $\varepsilon(X)$ is not an invertible element of $\mcO$? 

Consider one of the simplest examples of the ring of integers $\mcO=\Z[\sqrt{-5}]$, with $K=\Q(\sqrt{-5})$ and a nontrival ideal class group $\Cl(\mcO)\cong \Z/2$, see~\cite{Conrad19_ideal_classes}. The ideal $\mu=(2,1+\sqrt{-5})$ in $\mcO$ is non-principal, and $\mu^2=(2)$, so we set $z=2$. The quotient $\mcO/\mu\cong \FF_2$, so $\mu$ is prime in $\mcO$. The only invertible elements in $\mcO$ are $\pm 1$. Let us try 
\[
\varepsilon(X) = \frac{1+\sqrt{-5}}{2}\in 2^{-1}\mu 
\]
so that $\undeps_X= 1+\sqrt{-5}$. We need to find a system of parameters so that relations \eqref{eq_four_one_z}-\eqref{eq_four_three_z} and \eqref{eq_four_four_a} hold, subject to the integrality conditions in the table preceeding equation \eqref{eq_undt}. 
Looking at equations \eqref{eq_four_two_z}, \eqref{eq_four_three_z}, specialized to our case: 
\[
c\:(1+\sqrt{-5})  =  - d \:\undt , \ \ 
   d'(1+\sqrt{-5}) =  - d\: \varepsilon(1),
\]
we can set $d=s(1+\sqrt{-5})$ for $s\in\{\pm 1\}$ and reduce the two equations to 
\[
c=-s\undt, \ \ d'=-s\varepsilon(1), 
\]
allowing us to replace $c$ and $d'$ in the other two equations. Equation \eqref{eq_four_four_a} becomes trivial, and Equations~\eqref{eq_undt} and \eqref{eq_four_four_z} are reduced to a formula and an equation
\begin{eqnarray*}
    \undt & = & 2^{-1}\unda(1+\sqrt{-5})+\undb \varepsilon(1),    \\
    1 & = & -s \undt \varepsilon(1) + s (\sqrt{-5}-2), 
\end{eqnarray*}
respectively. Inserting $\undt$ into the 2nd equation above, one is left with 
\[
(2^{-1}\unda(1+\sqrt{-5})+\undb \varepsilon(1))\varepsilon(1) = \sqrt{-5}-2-s.
\]
Specialize to $\unda=1-\sqrt{-5}\in\mu$  to get 
\[
(3 + \undb \varepsilon(1))\varepsilon(1)= \sqrt{-5}-s-2 . 
\]
Further specializing to $\varepsilon(1)\in\{\pm 1\}\subset \mcO$ results in 
\[
\undb = \varepsilon(1) \left( (\sqrt{-5}-s-2)\varepsilon(1) - 3 \right)  \in \mcO. 
\]
One can check that all integrality conditions are satisfied by this solution, giving us an example of $A$ as above with $\varepsilon(X)$ neither $0$ nor in $\mcO^{\times}$ (and not in $\mcO$). We were quite loose with our choices, and there should be many examples with $\varepsilon(X)$ neither $0$ nor an invertible element of $\mcO$. 

One possible family of such examples would be first to set $d=s \undeps_X$ for $s\in \mcO^{\times}$, simplifying equations 
\eqref{eq_four_two_z} and \eqref{eq_four_three_z} to formulas for $c$ and $d'$ and making \eqref{eq_four_four_a} an identity, substitute in $\undt$ in \eqref{eq_undt} and then look for various solutions of the remaining single equation
\begin{equation}
-s ( z^{-1}\unda\: \undeps_X+\undb\varepsilon(1)) \:\varepsilon(1)+z^{-1}s\undeps_X^2 = 1,
\end{equation}
subject to the integrality conditions.


\section{Towards link homology}

Consider the tensor product $A^{\otimes 2}=A\otimes_{\mcO}A$ and denote by $\ell_x$ the operator of multiplication by $x\in A$ on the first term, 
\begin{equation}
\label{eqn_ell_x}
    \ell_x(y_1\otimes y_2)=xy_1\otimes y_2. 
\end{equation}
Multiplication by elements of $A$ on the first term turns $A^{\otimes 2}$ into a projective $A$-module isomorphic to 
\[
A \otimes A = (A\otimes 1) \oplus (A\otimes \mu X) \cong A \oplus (A\otimes_{\mcO}\mu). 
\]
Indeed, tensoring with $\mu$ over $\mcO$ is an involution on the category of $A$-modules, taking projectives to projectives. 
It would be interesting to understand under what conditions there is an isomorphism of $A$-modules $A\cong A\otimes_{\mcO}\mu$. 

Consider the multiplication map $m:A\otimes A\lra A$ and 
recall the formula \eqref{eq_mult_A} for it: 
\begin{equation}
    \label{eq_mult_A_again}
    u_1 X \, u_2 X = \frac{u_1 u_2}{z}\,(\unda X +\undb), \ \ \  u_1,u_2\in \mu. 
\end{equation}

There are isomorphisms $\mu\otimes \mu \cong (z)\cong \mcO$ of $\mcO$-modules. Pick $k\ge 2$ and elements $u_1,\dots, u_k\in\mu$ that generate the ideal $\mu$. Any ideal in a Dedekind domain can be generated by two elements, so one can always choose $k=2$ above. Pick elements $u_1',\dots, u_k'\in \mu$ so that 
\begin{equation}\label{sum_ujs_z}
\sum_{j=1}^k u_ju_j'=z. 
\end{equation}
We use shorthand $u_j\otimes u_j'$ to denote the sum $\sum_{j=1}^k u_j\otimes u_j'$ and likewise for $u_ju_j'=z$. 

Introduce the following elements 
\begin{equation}
\begin{split}
    X_u & \ := \ uX \otimes 1 - 1\otimes u X \in A^{\otimes 2}, \ \ u\in \mu,\\
   \label{eqn_suppressed} 
   \widehat{X} & \ := \  u_jX\otimes u_j' X - (\unda X\otimes 1+ \undb 1\otimes 1) \in A^{\otimes 2}, \\
\end{split}
\end{equation}
where in the second formula in~\eqref{eqn_suppressed}, the sum sign over the index $j$ is suppressed. 
Note that
\begin{equation}\label{eq_us_Xs}
u_jX u_j' X =  (u_ju_j') X^2  = z(z^{-1}\unda X+z^{-1}\undb  1) = \unda X +\undb 1,
\end{equation}
where the second equality in \eqref{eq_us_Xs} holds due to \eqref{eq_mult} and the first two rescaling of the parameters in \eqref{eq_param_shift_new}. 
Then 
\[X_{u_1}+X_{u_2}=X_{u_1+u_2}, \ \ \ 
s X_u = X_{su}, \ s\in \mcO, 
\]
and 
\begin{equation}\label{eq_whX_diff}
    \widehat{X} = u_jX\otimes u_j' X - u_jX u_j' X\otimes 1, 
\end{equation}
in view of \eqref{eq_us_Xs}. 
The set $X_{\mu}:=\{X_u\}_{u\in \mu}$ is an $\mcO$-submodule of $A^{\otimes 2}$. 
\begin{prop}
    We have 
    \begin{equation}
        \ker(m) = X_{\mu}\oplus \mcO \widehat{X}.
    \end{equation}
\end{prop}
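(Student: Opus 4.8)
The plan is to exploit the $\mcO$-module decomposition of $A^{\otimes 2}$ coming from $A=\mcO 1\oplus\mu X$ and to read off $\ker(m)$ in coordinates. First I would record
\[
A\otimes_{\mcO}A \ = \ \mcO(1\otimes 1)\ \oplus\ (1\otimes\mu X)\ \oplus\ (\mu X\otimes 1)\ \oplus\ (\mu X\otimes_{\mcO}\mu X),
\]
and pin down the fourth summand. Since $\mu X\cong\mu$ as $\mcO$-modules (via $uX\mapsto u$) and $\mu$ is invertible, the multiplication map $\mu\otimes_{\mcO}\mu\to\mu^2=(z)$ is an isomorphism --- being an isomorphism is a local condition and $\mu$ is locally principal --- so $\mu X\otimes_{\mcO}\mu X$ is free of rank one. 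Moreover, with the $u_j,u_j'\in\mu$ chosen so that $\sum_j u_ju_j'=z$ as in \eqref{sum_ujs_z}, the element $\widehat{X}_0:=u_jX\otimes u_j'X$ (sum over $j$ suppressed) corresponds to the generator $z$ of $(z)$ under this isomorphism, hence is a free $\mcO$-generator of $\mu X\otimes_{\mcO}\mu X$; in particular $\widehat{X}_0$, and therefore $\widehat{X}=\widehat{X}_0-(\unda X\otimes 1+\undb\,1\otimes 1)$, is independent of the auxiliary choice of the $u_j,u_j'$.

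With this in hand, every $\xi\in A^{\otimes 2}$ has a unique normal form
\[
\xi \ = \ \alpha\,(1\otimes 1)+1\otimes vX+wX\otimes 1+\rho\,\widehat{X}_0,\qquad \alpha,\rho\in\mcO,\ v,w\in\mu .
\]
Applying $m$ and using \eqref{eq_us_Xs} in the form $m(\widehat{X}_0)=\unda X+\undb$ gives
\[
m(\xi)\ =\ (\alpha+\rho\undb)\,1\ +\ (v+w+\rho\unda)\,X ,
\]
so $m(\xi)=0$ precisely when $\alpha=-\rho\undb$ and $w=-v-\rho\unda$. Substituting these relations back in, such a $\xi$ equals $-X_v+\rho\,\widehat{X}$. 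Letting $v$ run over $\mu$ and $\rho$ over $\mcO$, this shows $\ker(m)=X_\mu+\mcO\widehat{X}$; the reverse inclusion is anyway clear directly, as $m(X_u)=uX-uX=0$ and $m(\widehat{X})=(\unda X+\undb)-(\unda X+\undb)=0$.

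To conclude I would verify that the sum is direct. If $X_v=\rho\,\widehat{X}$ for some $v\in\mu$ and $\rho\in\mcO$, compare components in the free summand $\mu X\otimes_{\mcO}\mu X$: the left-hand side has component $0$ there while the right-hand side has component $\rho\,\widehat{X}_0$, so $\rho=0$ because $\widehat{X}_0$ generates a free rank-one module; then $X_v=0$, forcing $v=0$. Hence $X_\mu\cap\mcO\widehat{X}=0$ and $\ker(m)=X_\mu\oplus\mcO\widehat{X}$. I expect the only genuinely delicate point to be the second step --- establishing that $\mu X\otimes_{\mcO}\mu X$ is free of rank one with $\widehat{X}_0$ a generator (equivalently, that $\sum_j u_j\otimes u_j'$ is the preimage of $z$ under $\mu\otimes_{\mcO}\mu\xrightarrow{\sim}\mu^2$); once that is in place, everything else is a one-line computation in the four-term decomposition.
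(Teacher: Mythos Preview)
Your proof is correct and follows essentially the same route as the paper: both use the four-term $\mcO$-module decomposition of $A\otimes A$ and the fact that $\mu X\otimes_{\mcO}\mu X$ is free of rank one on $\widehat{X}_0$. The paper phrases the argument as a reduction (add a suitable $X_u$ to kill the $1\otimes X$-component, subtract an $\mcO$-multiple of $\widehat{X}$ to kill the $X\otimes X$-component, then observe the remaining $X\otimes 1$ and $1\otimes 1$ components must vanish), while you solve explicitly in coordinates; these are the same argument in different clothing. Your write-up is in fact more careful than the paper's on the one nontrivial point---that $\mu\otimes_{\mcO}\mu\xrightarrow{\sim}(z)$ identifies $\sum_j u_j\otimes u_j'$ with the generator $z$, so $\widehat{X}_0$ really does generate the fourth summand freely---which the paper uses implicitly when it says ``subtracting an $\mcO$-multiple of $\widehat{X}$''.
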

\begin{proof} Formulas \eqref{eq_us_Xs} and \eqref{eq_whX_diff} imply that $\widehat{X}\in \ker(m)$, and it is clear that $X_\mu\oplus \mcO \widehat{X}\subset \ker(m)$. 
    Vice versa, given an element $y\in A^{\otimes 2}$, by adding the term $X_u$ for some $u$, we can assume that $y$ does not contain a multiple of $1\otimes X$. Due to \eqref{eq_whX_diff} and subtracting an $\mcO$-multiple of $\widehat{X}$, we can assume that $y$ does not contain a multiple of $X\otimes X$ and reduce it to the form 
    \[
    y =  y_1 X\otimes 1+ y_2 1\otimes 1, 
    \]
    for some $y_1\in \mu$ and $y_2\in\mcO$. The condition $y\in \ker(m)$ forces $y_1=y_2=0$.
\end{proof}
Recall~\eqref{eqn_ell_x}. Using~\eqref{sum_ujs_z}, we have 
\begin{equation}\label{eq_act_ell}
    \ell_{uX}(\widehat{X}) = - X_{\undb u} +\frac{u \unda}{z}\widehat{X}, \ \ \ 
    \ell_{uX}(X_{u'}) = - \frac{u u'}{z} \widehat{X}. 
\end{equation}
Consequently, there is an isomorphism of $\mcO$-modules $A\otimes_{\mcO} \mu\cong \ker(m)$ taking $1\otimes u$ to $X_u$. 

\vspace{0.07in} 

To have an isomorphism of $A$-modules $\ker(m)\cong A$ requires $\ker(m)$ to have a single generator as an $A$-module. Such a generator must have the form 
\begin{equation}
    \widetilde{X}= \gamma \widehat{X} - X_u, 
\end{equation}
for some $\gamma\in \mcO$ and $u\in \mu$. Using \eqref{eq_act_ell} we have 
\begin{equation}
    \ell_{u'X}(\widetilde{X}) = \frac{u u'}{z}\widehat{X} + \gamma\left( -X_{\undb u'} + \frac{u' \unda}{z}\widehat{X}\right) 
    = \frac{u'}{z}(\gamma \unda + u)\widehat{X}-  X_{\gamma \undb u'}. 
\end{equation}
A necessary condition for $\widehat{X}$ to generate $\ker(m)$ is that $\mu = \mcO u +  \gamma \undb\mu$ and that $\mcO$ is generated by $\gamma$ and elements $\frac{u'}{z}(\gamma \unda + u)$ over $u'\in \mu$.

Let us specialize to $\gamma=1$ (this also covers the case $\gamma\in \mcO^{\times}$). Then 
\begin{equation}
    \widetilde{X}=\widehat{X} - X_u,  \  \ 
    \ell_{u'X}(\widetilde{X}) = \frac{u'}{z}(\unda + u)\widehat{X}-  X_{\undb u'}, 
\end{equation}
and 
\begin{equation}
    \ell_{u'X}(\widetilde{X}) -\frac{u'}{z}(\unda + u)\widetilde{X} =  X_{u'(-\undb + \frac{u}{z}(\unda+u))}.
\end{equation}
Consequently, a sufficient condition for $\ker(m)\cong A$ is that for some $u\in \mu$ the element 
\begin{equation}\label{eq_invert_element}
-\undb +  \frac{u(\unda+u) }{z}
\end{equation}
is invertible in $\mcO$. Taking $u=0$, we see that $\widehat{X}$ is a generator of $\ker(m)$ if and only if $\undb$ is invertible. The Frobenius algebras with $\varepsilon(X)=0$ described in Section~\ref{subsec_caseone} have this property. 
This at least gives us some examples when $\ker(m)\cong A$ as an $A$-module.

We do not know a classification or a convenient description of Frobenius algebras as above for which there is an isomorphism $\ker(m)\cong A$.

\vspace{0.07in} 

A Frobenius algebra $A\cong R[X]/(X^2-aX-b)$ over the ground ring $R$, of rank two with a basis $\{1,X\}$ and a fixed Frobenius structure gives a link homology theory~\cite{Kho_link_06,Bar_Natan05,KhovRob_link_hom_II}. Starting with a planar diagram $D$ of a link $L$, one can repeat the construction of~\cite{Kho_Jones00}. One resolves $D$ into $2^k$ planar diagrams, where $k$ is the number of crossings of $D$. To a diagram with $n$ circles, one associates $A^{\otimes n}$. These $R$-modules fit into a commutative cube via maps given by the multiplication $m$ and comultiplication $\Delta$ in $A$ on proper terms in the tensor product. The commutative cube is collapsed into the complex $C(D)$, with an appropriate grading shift. Homology $H(D)=H(C(D))$ of the complex can be shown to be invariant under the Reidemeister moves I, II, III, resulting in a link homology theory $H(L)$ for oriented links $L$.  

\begin{figure}
    \centering
\begin{tikzpicture}[scale=0.6,decoration={
    markings,
    mark=at position 0.50 with {\arrow{>}}}]
\begin{scope}[shift={(0,0)}]

\draw[thick] (0,0) -- (0,4);

\node at (1,2.5) {RI};

\node at (-0.6,3.3) {$D$};
\node at (2.7,3.5) {$D_1$};

\node at (1,2) {$\sim$};

\draw[thick] (2,0) .. controls (2,3) and (4,3) .. (4,2);

\draw[thick] (2,4) .. controls (2,3) and (2.15,2.65) .. (2.35,2.15);

\draw[thick] (2.5,1.85) .. controls (2.75,1) and (4,1) .. (4,2);
\end{scope}

\begin{scope}[shift={(9,0)}]

\draw[thick] (0,0) -- (0,4);

\draw[thick] (3,2) arc (0:360:1);

\draw[thick,->] (4,2) -- (5.5,2);

\end{scope}

\begin{scope}[shift={(15,0)}]

\draw[thick] (0,4) .. controls (0.75,1.25) and (1.75,3) .. (2,3);

\draw[thick] (0,0) .. controls (0.75,2.75) and (1.75,1) .. (2,1);

\draw[thick] (2,1) .. controls (3.75,0.15) and (3.75,3.85) .. (2,3);

\end{scope}

\end{tikzpicture}
    \caption{Left: One of the two flavors of the Reidemeister I move of link diagrams. Middle: two resolutions of the curl diagram $D_1$ on the right hand side of the RI move. Right: Depicting the tensor product $A\otimes \mu$ by a dot on a strand; two adjacent dots can be removed due to the isomorphism $\mu\otimes_{\mcO}\mu\cong \mcO$. }
    \label{fig_00001}
\end{figure}
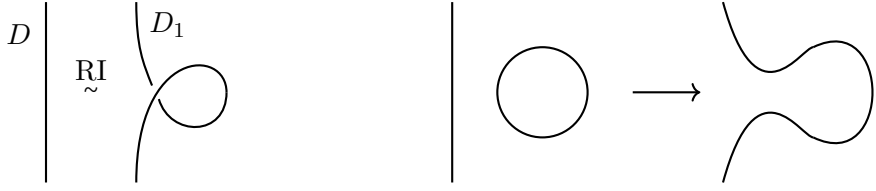

The key property of the Frobenius algebra $A$ is that it is of rank 2 over $R$. This is necessary for the invariance under the Reidemeister I move, shown in Figure~\ref{fig_00001} on the left. In that move, one adds a curl to an interval of the diagram $D$ resulting in new diagram $D_1$. For one of the two possible types of the curl, the complex assigned to $D_1$ at the crossing is 
\begin{equation}\label{eq_comp_R1}
0\lra A\otimes A \stackrel{m}{\lra} A \lra 0. 
\end{equation}
It breaks down into the sum of a contractible complex $0 \lra A\stackrel{1}{\lra}A\lra 0$ and the complex $0\lra A \lra 0$. The latter complex is also what the interval contributes to $C(D)$. Removing the contractible summands leads to a homotopy equivalence $C(D_1)\cong C(D)$ and an isomorphism of homology groups $H(D_1)\cong H(D)$. 
Reidemeister I move for the opposite type of crossing is dealt with by passing to the dual of the complex \eqref{eq_comp_R1}, via the isomorphism $\wvareps:A\cong A^{\ast}$, and simplifying it in the same manner. 

If $R$ and $A$ are additionally $\Z$-graded, and Frobenius structure respects the grading, then $H(L)$ is bigraded. Otherwise $H(L)$ has a single grading. 

Using a cyclic Frobenius algebra $A$ of higher rank over $R$, with a higher degree relation $X^n=f(X)$, for $n>2$ and $\deg(f)\le n-1$,  does not work, since removing the contractible summand leaves one with the complex $0 \lra (A/R\cdot 1)\otimes_{R} A\lra 0$ isomorphic to $n-1$ copies of $0\lra A\lra 0$, since $\rk_R(A/R\cdot 1)=n-1$. The higher rank cyclic extensions do give rise to link homology, but that requires working with matrix factorizations or foam evaluation or bringing in suitable categories to make it work. Some of these approaches can be found in papers~\cite{KR_matrixfact08,RW_eval_foams20,CK_II08,Man07,Web_higher17,RW16}. 

\vspace{0.07in} 

Consider a rank 2 Frobenius extension $(A,\mcO)$ from the present paper where $A$ is a projective but not a free $\mcO$-module. Given a planar diagram $D$ of a link $L$, one can form the complex $C(D)$ as above, using the tensor powers of $A$ over $\mcO$. The complex is singly-graded (no $q$-grading) and has homology groups $H(D)$. 

Tensoring the complex with the field of fractions $K$ results in the complex $C(D)\otimes_{\mcO}K$ of $K$-vector spaces whose homology groups $H(D)\otimes K$ are invariant under the Reidemeister moves and give a version of the Eun Soo Lee link homology theory~\cite{Lee_endo05}, due to $X^2=aX+b$ having two distinct roots in some quadratic extension of $K$. 

It is natural to ask whether, for some of these Frobenius extensions, groups $H(D)$ themselves are link invariants. The easiest way for this to happen is to have homotopy equivalences $C(D)\cong C(D_1)$ for diagrams $D,D_1$ related by a Reidemeister move. 

Let us examine Reidemeister move I, already discussed above.  The complex \eqref{eq_comp_R1} breaks down into the sum of a contractible complex and the complex $0\lra \ker(m)\lra 0$. The latter is isomorphic to $A\otimes \mu$. We would like it to be isomorphic to $A$, see the earlier discussion in this section. In particular, it is isomorphic to $A$ if $\varepsilon(X)=0$, but that is a rather special condition.  

It would be useful to further develop this construction, and 
\begin{itemize}
    \item understand under what conditions there is an isomorphism $\ker(m)\cong A$, 
    \item assuming $\ker(m)\cong A$, check what additional conditions are required for the invariance of the complex $C(D)$ under Reidemeister moves II, III and look for examples of such Frobenius $\mcO$-algebras $A$.  
\end{itemize}
 
\begin{remark} Since $\ker(m)\cong A\otimes \mu$, instead of looking for an isomorphism $A\otimes \mu\cong A$, another possibility is to depict the  tensor product $A\otimes \mu$ by a dot on a strand and replace the Reidemeister I move by the move RI' shown in Figure~\ref{fig_00002}. Two dots on a strand encode the tensor product with $\mu\otimes \mu\cong \mcO$ and could be erased. More precisely, 
one sets us cobordisms between a strand with no dots and a strand with two dots that are mutually-inverse module isomorphisms $\mu\otimes \mu\leftrightarrow \mcO$. 

\vspace{0.1in}

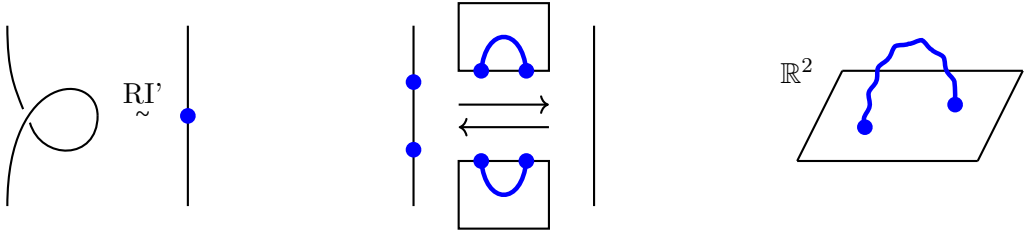
\begin{figure}
    \centering
\begin{tikzpicture}[scale=0.6,decoration={
    markings,
    mark=at position 0.50 with {\arrow{>}}}]
\begin{scope}[shift={(1,0)}]

\draw[thick] (0,0) .. controls (0,3) and (2,3) .. (2,2);

\draw[thick] (0,4) .. controls (0,3) and (0.15,2.65) .. (0.35,2.15);

\draw[thick] (0.5,1.85) .. controls (0.75,1) and (2,1) .. (2,2);

\draw[thick] (4,0) -- (4,4);

\node at (3,2.5) {RI'};

\node at (3,2) {$\sim$};

\draw[thick,fill,blue] (4.15,2) arc (0:360:1.5mm);
\end{scope}




\begin{scope}[shift={(10,0)}]

\draw[thick] (0,0) -- (0,4);

\draw[thick,fill,blue] (0.15,2.75) arc (0:360:1.5mm);

\draw[thick,fill,blue] (0.15,1.25) arc (0:360:1.5mm);

\draw[thick,->] (1,2.25) -- (3,2.25);

\draw[thick,<-] (1,1.75) -- (3,1.75);

\draw[thick] (4,0) -- (4,4);

\draw[thick] (1,3) rectangle (3,4.5);

\draw[thick,fill,blue] (1.65,3) arc (0:360:1.5mm);

\draw[thick,fill,blue] (2.65,3) arc (0:360:1.5mm);

\draw[line width=0.6mm,blue] (1.5,3) .. controls (1.6,4) and (2.4,4) .. (2.5,3);

\draw[thick] (1,1) rectangle (3,-0.5);

\draw[thick,fill,blue] (1.65,1) arc (0:360:1.5mm);

\draw[thick,fill,blue] (2.65,1) arc (0:360:1.5mm);

\draw[line width=0.6mm,blue] (1.5,1) .. controls (1.6,0) and (2.4,0) .. (2.5,1);
\end{scope}

\begin{scope}[shift={(18.5,1)}]

\draw[thick] (1,2) -- (5,2);

\draw[thick] (0,0) -- (4,0);

\draw[thick] (0,0) -- (1,2);

\draw[thick] (4,0) -- (5,2);

\node at (0,2) {$\mathbb{R}^2$};

\draw[thick,fill,blue] (3.65,1.25) arc (0:360:1.5mm);

\draw[thick,fill,blue] (1.65,0.75) arc (0:360:1.5mm);

\draw[thick,line width=0.6mm,blue] (1.5,0.75) decorate [decoration={snake,amplitude=0.25mm}] {.. controls (1.75,3) and (3.25,3) .. (3.5,1.25)};

\end{scope}

\end{tikzpicture}
    \caption{Left: RI' move, a variation on the Reidemeister I move. The tensor product $A\otimes \mu$ is depicted by a dot on a strand. Middle: two adjacent dots on a strand can be canceled due to an isomorphism $\mu\otimes_{\mcO}\mu\cong \mcO$. Right: dots can be moved out of the strands to the plane, with the canceling cobordism on a pair of dots shown. }
    \label{fig_00002}
\end{figure}


To a collection of $k$ planar circles, one attaches $A^{\otimes k}$. Whichever copy of $A$ in this tensor product is tensored with $\mu$ does not make a difference for the tensor product. Consequently, the dot representing $\mu$ does not have to be on any particular edge or circle and can be placed anywhere in the plane. A pair of dots in the plane can be canceled, which can be represented by an arc in $\R^2\times [0,\infty)$ connecting them. This arc can intersect facets of a cobordism between 1-manifolds in $\R^2$ without interacting with the facets.

In this setup, however, there is a problem with the Reidemeister II move, where, in the resolution of the 2-crossing diagram, two of the resolutions $D_{00},D_{11}$ are isotopic and the third resolution, say $D_{10}$, is isotopic to each of them with an additional circle. This circle contributes $A$ to the state space of diagram  $D_{10}$. The summand $\mcO 1$ can be canceled with the state space of $D_{11}$, but for the remaining term $\mu X$ to cancel with $D_{00}$ via a contractible summand one still expects an isomorphism $A\otimes \mu\cong A$. 
\end{remark}

\begin{remark} Our examples 
of rank two Frobenius $\mcO$-algebras that are not free over $\mcO$ are not graded. It is easy to produce examples of rank $N>2$ commutative graded Frobenius $\mcO$-algebras. For this, assume that $\mu$ is a non-principal ideal in $\mcO$ of order $N-1$ in $\Cl(\mcO)$. Let $\mu^{N-1}=(z)$, so that an isomorphism $\mu^{\otimes {N-1}}\cong (z)\cong \mcO$ is fixed. 

Start with the Frobenius algebra $A'_N=\mcO[X]/(X^N)$ with the trace $\varepsilon(X^{N-1})=z^{-1}$ and
 $\varepsilon(X^m)=0$ for $m<N-1$. The comultiplication is given by 
\[\Delta(1)=z(X^{N-1}\otimes 1 + X^{N-2}\otimes X+\ldots + 1\otimes X^{N-1}).
\]
Tensoring with $K$ produces a commutative Frobenius $K$-algebra $A'_N\otimes K$. Set $\deg X =2$ to make $A'_N$ and $A'_N\otimes K$ into graded Frobenius algebras. 

Consider $\mcO$-subalgebra of  $A'_N\otimes K$ given by 
\[
A_N:= \mcO1 \oplus \mu X\oplus \mu^2 X^2\oplus \ldots \oplus \mu^{N-1}X^{N-1}.  
\]
Due to the isomorphism $\mu^{N-1}\cong \mcO$, the trace $\varepsilon$ restricts to an isomorphism $\mu^{N-1}X^{N-1}\cong\mcO$. It is straightforward to see that $A_N$ is a Frobenius $\mcO$-algebra. 

The algebras $A'_N\otimes K$, when $N$ is odd and $\sqrt{N}\in K$, give examples of Turaev--Turner unoriented 2D TQFTs~\cite{TT06}. Algebra $A'_3$ is used in the construction of $\mathfrak{sl}(3)$-link homology~\cite{Kh_sl304}. It is possible that some twists of this algebra can give rise to modifications of $\mathfrak{sl}(3)$-homology, and likewise for the $\mathfrak{sl}(N)$-link homology. 
\end{remark}

\begin{remark} An isomorphism of $\mcO$-modules $A\cong A^{\ast}$, see Section~\ref{section:intro}, constraints us to elements $P$ of order two in $\Cl(\mcO)$ as a summand of $A$. This isomorphism comes from a diffeomorphism between an oriented circle $\SS^1$ and the same circle with the opposite orientation. Going one dimension up, if looking for an oriented 3D TQFT over $\mcO$, one would likewise be forced into elements of order 2 in the ideal class group as summands of $\mcF(M)$, for an oriented surface $M$, due to diffeomorphisms $M\cong (-M)$. 

Returning to dimension two, one way to avoid this restriction is to add additional decorations to 1-manifolds and 2-cobordisms between them. For example, one can pick a group $G$ and consider 2-cobordisms $S$ equipped with a principal $G$-bundle over them. A oriented circle $\SS^1$ is then replaced by a pair $(\SS^1,c)$, where $c$ is a conjugacy class in $G$. Reversing the orientation results in the pair $(\SS^1,c^{-1})$, which is diffeomorphic to $(\SS^1,c)$ only if the conjugacy classes $c$ and $c^{-1}$ coincide. It would be interesting to investigate such decorated 2D TQFTs over Dedekind domains $\mcO$, including the special case $G=\Cl(\mcO)$. 
\end{remark}

\bibliographystyle{amsalpha} 
\bibliography{alg_fields}

\end{document}